\newtheorem{theorem}{Theorem}
\newtheorem{lemma}[theorem]{Lemma}
\newtheorem{corollary}[theorem]{Corollary}
\theoremstyle{definition}
\newcommand{\problemdef}[3]{
	\begin{center}\fbox{
	\begin{minipage}{0.95\textwidth}
		\noindent
		#1
		\vspace{5pt}\\
		\setlength{\tabcolsep}{3pt}
		\begin{tabularx}{\textwidth}{@{}lX@{}}
			\textrm{Input:}     & #2 \\
			\textrm{Task:}  & #3
		\end{tabularx}
	\end{minipage}}
	\end{center}
}
\DeclarePairedDelimiterX{\abs}[1]{\lvert}{\rvert}{#1}
\newcommand{\bigO}{{O}}
\newcommand{\ES}{\textsc{Equitable Scheduling}\xspace}
\newcommand{\ESUP}{\textsc{Equitable Scheduling with Unit Processing Times}\xspace}
\newcommand{\ESSD}{\textsc{Equitable Scheduling with Single Deadlines}\xspace}
\newcommand{\ESSDabbrv}{\textsc{ESSD}\xspace}
\newcommand{\ESSDabbrvstar}{\textsc{ESSD*}\xspace}
\newcommand{\ESPC}{\textsc{Equitable Scheduling with Precedence Constraints}\xspace}
\newcommand{\ESPCabbrv}{\textsc{ESPC}\xspace}
\newcommand{\ESPCabbrvstar}{\textsc{ESPC*}\xspace}
\newcommand{\equitable}[1]{$#1$-equitable\xspace}
\newcommand{\gammaspecial}{\gamma^{\le n - \alpha}}
\newcommand{\xspecial}{x^{\le n - \alpha}}
\title{Equitable Scheduling on a Single Machine\footnote{An extended abstract of this paper appeared in the proceedings of the 35th AAAI Conference on Artificial Intelligence (AAAI '21)~\cite{BKN21}. This version contains full proof details and additional hardness results.}}
\author[1]{Klaus~Heeger\thanks{Supported by DFG Research Training Group 2434 ``Facets of Complexity''.}}
\author[2]{Danny~Hermelin}
\author[3]{George~B.~Mertzios\thanks{Supported by the EPSRC grant EP/P020372/1 and by DFG RTG~2434 while visiting TU~Berlin.}}
\author[2]{Hendrik~Molter\thanks{Supported by the DFG, project MATE (NI 369/17), and the ISF, grant No.~1070/20..}} 
\author[1]{Rolf~Niedermeier}
\author[2]{Dvir~Shabtay}
\affil[1]{\small Algorithmics and Computational Complexity, Faculty~IV, TU Berlin, Berlin, Germany\\
 \texttt{heeger@tu-berlin.de, rolf.niedermeier@tu-berlin.de}}
 \affil[2]{\small Department of Industrial Engineering and Management, Ben-Gurion~University~of~the~Negev, 
Beer-Sheva, 
Israel\\ \texttt{hermelin@bgu.ac.il, molter@post.bgu.ac.il, dvirs@bgu.ac.il}}
\affil[3]{\small Department of Computer Science, Durham University, UK\\ \texttt{george.mertzios@durham.ac.uk}}
\date{}
\begin{document}

\maketitle

\begin{abstract}
We introduce a natural but seemingly yet unstudied generalization of the problem of scheduling jobs on a single machine so as to minimize the number of tardy jobs. Our generalization lies in simultaneously considering several instances of the problem at once. In particular, we have $n$ clients over a period of $m$ days, where each client has a single job with its own processing time and deadline per day. Our goal is to provide a schedule for each of the $m$ days, so that each client is guaranteed to have their job meet its deadline in at least $k \leq m$ days. This corresponds to an
equitable schedule where each client is guaranteed a minimal level of service throughout the period of $m$ days. We provide a thorough analysis of the computational complexity of three main variants of
this problem, identifying both efficient algorithms and worst-case intractability results. 

\smallskip

\noindent\emph{Keywords:} Resource Allocation, Fairness, Equity, Fixed-Parameter Tractability, Approximation
\end{abstract}

\section{Introduction}

One of the most basic and fundamental scheduling problems is that of
minimizing the number of tardy jobs on a single machine. In this problem we are
given $n$~jobs, where each job~$j$ has an integer processing time~$p_j$ and an
integer deadline~$d_j$, and the goal is to find a permutation of the jobs so that the number of jobs exceeding their deadlines is minimized (a job $j$ exceeds its deadline if the total processing time of jobs preceding it in the schedule, including itself, is larger than~$d_j$). This problem is known as the $1||\sum U_j$ problem in the classical three-field notation for scheduling problems by~\citet{Graham79}. It is well-known that $1||\sum U_j$ is solvable in $O(n\log n)$~time~\cite{Maxwell1970,Moore,Strum1970}, but becomes e.g.\ NP-hard in case of simple (chain) precedence constraints even if all processing times~$p_j$ are the same~\cite{LENSTRA1980270}. There is also a more recent survey concerning the minimization of  the weighted number of tardy jobs~\cite{Adam} and the problem has also been thoroughly studied for parallel machines~\cite{Baptiste}.

Due to the ever increasing importance of high customer satisfaction, 
fairness-related issues are becoming more and more important in all
areas of resource allocation~\cite{BKN18,FSTW19,KK06,LR16,Wal20}.\footnote{For instance, in~2018, ACM
started its new conference series on ``Fairness, Accountability, and Transparency (originally FAT, since~2021 FAccT)''.} 
For instance, in their seminal work~\citet{BCPV96}
introduced the concept of proportionate progress, a fairness
concept for resource allocation problems. They applied
it to periodic scheduling by assigning resources to jobs according
to their rational weights between 0 and~1, thereby aiming to make sure that a job never gets an entire slot (in the periodic schedule) ahead or behind.
Nowadays, equity and fairness in resource allocation is a widely discussed topic, leading to considerations such as the ``price of
fairness''~\cite{BFT11} or to discussions about the abundance of
fairness metrics~\cite{GJRYZ20}.

We study a very natural but seemingly novel extension of the $1||\sum U_j$~problem, taking into account a very basic aspect of
equity among the customers in order to guarantee high customer satisfaction. Our task is to serve 
$n$ clients for $m$ days, where each client has a single job to be scheduled for every day. This should be done in an equitable fashion. We focus on a very simple notion of equity
where, given an integer parameter $k$, we request that each client receives satisfactory service in at least $k$ out of the $m$ days. In what follows, we refer to~$k$ as \emph{equity parameter}. It is important to note here that since all scheduling requests for all days and clients are assumed to be known in advance, we consequently still face an offline scenario of scheduling.

Consider the following motivating example. Imagine that a research group with $n$~PhD~students owns a single compute server, where each student has to submit a plan for their experiments for the next~$m$ days. Typically, at most one hour per experiment suffices and the needed computation time is known in advance. To make sufficient progress on their research, all students need to have regular access to the compute server for performing individual experiments. All students request access to the server every day, but due to high demand not all of them may be scheduled early enough during the day so that the experiments can still be evaluated the very same day. The chair of the group  wants to guarantee a schedule for the compute server such that every student can evaluate their experiments on the same day in at least $k$ out of~$m$ days. This is precisely the scenario we wish to model in the current manuscript. 

\subsection{Three Equitable Scheduling Variants}

Our model can be formally described as follows: We wish to schedule the jobs of a set of $n$ clients over $m$~days in an equitable way. At each day, each client has a single job to be scheduled non-preemptively on a single machine. 
We let $p_{i,j}$ and $d_{i,j}$ respectively denote the integer \emph{processing time} and \emph{deadline} of the job of client $i \in \{1,\ldots,n\}$ at day $j \in \{1,\ldots,m\}$. In addition, we let~$k$~denote an \emph{equity parameter} given as input, with $k \in \{0,\ldots,m\}$. 

A schedule $\sigma_j$ for day $j \in \{1,\ldots,m\}$ is a permutation $\sigma_j: \{1,\ldots,n\} \to \{1,\ldots,n\}$ representing the order of jobs to be processed on our single machine on day $j$. For a given schedule~$\sigma_j$, the \emph{completion time} $C_{i,j}$ of the job of client $i$ is defined as $C_{i,j}=\sum_{\sigma_j(i_0) \leq \sigma_j(i)} p_{i_0,j}$. In this way, the job meets its deadline on day $j$ if $C_{i,j} \leq d_{i,j}$. If this is indeed the case, then we say that client $i$ is \emph{satisfied} on day $j$, and otherwise $i$~is \emph{unsatisfied}. Our goal is to ensure that each client is satisfied in at least $k$ days out of the entire period of $m$ days; such a solution schedule (for all~$m$~days) is referred to as \emph{$k$-equitable}. Thus, depending on how large $k$ is in comparison with~$m$, we ensure that no client gets significantly worse service than any other client. 
 
\problemdef{\ES(ES):}
{A set of $n$ clients, each having a job with processing time $p_{i,j}$ and deadline $d_{i,j}$ for each day  $j \in \{1,\ldots,m\}$, and an integer $k$.}
{Find a set of $m$ schedules $\{\sigma_1,\ldots,\sigma_m\}$ so that for each $i \in \{1,\ldots,n\}$ we have $|\{ j \mid 1 \leq j \leq m \land C_{i,j} \leq d_{i,j}\}| \geq k$.}

We consider three variants of \ES, each corresponding to a well-studied variant of the $1||\sum U_j$ problem when restricted to a single day.
\begin{itemize}
\item In the first variant, which we call \ESUP (ESUP), the processing time of all jobs are unit in each day. That is, we have $p_{i,j} = 1$ for each $i \in \{1,\ldots,n\}$ and $j \in \{1,\ldots,m\}$. 

\item In the \ESSD (ESSD) problem, all jobs have the same deadline at each day. That is, at each day $j \in \{1,\ldots,m\}$ we have $d_{i,j} = d_j$ for each $i \in \{1,\ldots,n\}$. 

\item In the final variant, called \ESPC{} (ESPC), all processing
times are unit, and jobs share the same deadline at each day, \emph{i.e.}
$p_{i,j} = 1$ and $d_{i,j} = d_j$ for each $i \in \{1,\ldots,n\}$ and $j \in \{1,\ldots,m\}$. In addition, in each day we are given a \emph{precedence} directed acyclic graph (DAG) $G_j = (\{1,\ldots,n\},E_j)$ which represents precedence constraints on the jobs at day~$j$. We say that a schedule $\sigma_j$ is \emph{feasible} if for each $(i_1,i_2) \in E_j$ we have $\sigma_j(i_1) < \sigma_j(i_2)$. 
\end{itemize}

For each of these variants, we will also consider the special case where the input for each day is the same, and we will append an `*' to the name of the problem variant to indicate that this is the case we are considering. For example, in the ESPC* problem we have $d_{j_1}=d_{j_2}$ and $G_{j_1}=G_{j_2}$ for all $j_1,j_2 \in \{1,\ldots,m\}$.

We are mainly interested in exact
algorithms or algorithms with provable approximation guarantees.
We study the (parameterized) algorithmic complexity of all three
main variants (and some further variations) discussed above.

We use the following concepts from parameterized complexity theory~\cite{DF13,Nie06,FG06,Cyg+15}.
A parameterized problem~$L\subseteq \{(x,k)\in \Sigma^*\times \mathbb N\}$ is a subset of all instances~$(x,k)$ from~$\Sigma^*\times \mathbb N$,
where~$k$ denotes the \emph{parameter}.
A parameterized problem~$L$ is 
\begin{itemize}
\item FPT (\emph{fixed-parameter tractable}) if there is an algorithm that decides every instance~$(x,k)$ for~$L$ in~$f(k)\cdot |x|^{O(1)}$ time, and
 \item contained in the class XP if there is an algorithm that decides every instance~$(x,k)$ for~$L$ in~$|x|^{f(k)}$ time, 
\end{itemize}
where~$f$ is any computable function only depending on the parameter.
If a parameterized problem $L$ is W[1]-hard, then it is presumably not
fixed-parameter tractable.

\subsection{Our Results}
 Our main findings are as
follows.
\begin{itemize}
	\item For ESUP we show that the problem can be solved in polynomial time by a reduction to the \textsc{Bipartite Maximum Matching} problem. Our reduction can also be applied when jobs have release times, and when there is fixed number of machines available on each day.
	\item For \ESSDabbrv and \ESSDabbrvstar we show strong NP-hardness and W[1]-hardness for the parameter number $m$ of days even if $k=1$. On the positive side, we show that \ESSDabbrv can be solved in pseudo-polynomial time if the number $m$ of days is constant and is in FPT for the parameter number $n$ of clients. For \ESSDabbrvstar we give a polynomial-time algorithm that, for any~$k$, computes a $2k$-equitable set of schedules if there exists a $3k$-equitable set of schedules.
	\item For \ESPCabbrv we show NP-hardness even if~$k=1$ and there are only two days. For \ESPCabbrvstar we show NP-hardness and W[1]-hardness for the parameter number $m$ of days even if~$k=1$ and the precedence DAG only consists of disjoint paths. For \ESPCabbrv we also show NP-hardness for $k=1$ if the precedence DAG either consists of a constant number of disjoint paths or disjoint paths of constant length. On the positive side, we can show that \ESPCabbrv is in FPT for the parameter number $n$ of clients.
\end{itemize}

\section{Unit Processing Times}

In this section, we show that \ESUP can be solved in polynomial time by a reduction to the \textsc{Bipartite Maximum Matching} problem. Later in the section we will show that our reduction can also be applied when jobs have release times, and when there is fixed number of machines available on each day.

Recall that $p_{i,j}$ and $d_{i,j}$ respectively denote the processing time and deadline of the job of client $i$ on day~$j$, and that $k$ is the equity parameter. Let $d^*_j = \max_{1 \leq i \leq n} d_{i,j}$ denote the maximal deadline on day $j \in \{1,\ldots,m\}$. We create an undirected graph $G$ with the following vertices:
\begin{itemize}
\item For each $i\in \{1,\ldots,n\}$ and each $j\in \{1,\ldots,m\}$, we create a vertex $v_{i,j}$. The set of vertices $V = \{v_{i,j} : 1 \leq i \leq n, 1 \leq j \leq m\}$ represents all input jobs of all clients.
\item For each $d\in \{1,\ldots,d^*_j\}$ and each $j\in \{1,\ldots,m\}$, we create a vertex $u_{d,j}$. The set $U=\{u_{d,j}: 1 \leq d \leq d^*_j, 1\leq j \leq m\}$ represents all possible completion times of the all input jobs that meet their deadline.
\item For each $i\in \{1,\ldots,n\}$ and each $j \in \{1,\ldots,m-k\}$, we create a vertex $w_{i,j}$. The set $W = \{w_{i,j} : 1 \leq i \leq n, 1 \leq j \leq m-k\}$ represents the set of jobs that exceed their deadline.
\end{itemize}
The edges of $G$ are constructed as follows. For each $i\in \{1,\ldots,n\}$ and each $j\in \{1,\ldots,m\}$ we connect $v_{i,j}$ to:
\begin{itemize}
\item vertices $w_{i,1},\ldots, w_{i,m-k}$, and
\item vertices $u_{1,j}, \ldots, u_{d,j}$, where $d=d_{i,j}$.
\end{itemize}

\begin{lemma}\label{lem:matchingcorrect}
$G$ has a matching of size $nm$ if and only if there exists schedules $\{\sigma_1,\ldots,\sigma_m\}$ where no client is unsatisfied in more than $m-k$ days.  
\end{lemma}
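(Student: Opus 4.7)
The plan is to establish both directions by explicitly translating between matchings and schedules. A key observation, used throughout, is that with unit processing times the completion time of the job placed in position $d$ of $\sigma_j$ is exactly $d$, so client $i$ is satisfied on day $j$ iff $\sigma_j(i) \le d_{i,j}$. Also note that $|V| = nm$, so a matching of size $nm$ must saturate $V$.

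For the forward direction, assume schedules $\sigma_1,\dots,\sigma_m$ exist in which every client is unsatisfied on at most $m-k$ days. I define a matching $M$ of size $nm$ as follows: for each pair $(i,j)$, if $i$ is satisfied on day $j$ (\ie $\sigma_j(i) \le d_{i,j}$), match $v_{i,j}$ with $u_{\sigma_j(i),j}$; otherwise, $i$ is among at most $m-k$ ``unsatisfied'' days for client $i$, so I can injectively assign the $v_{i,j}$'s with $i$ unsatisfied on day $j$ to distinct vertices in $\{w_{i,1},\dots,w_{i,m-k}\}$. Every edge used exists in $G$ by construction, the matching is valid since the $u_{d,j}$'s used on any given day correspond to distinct positions of the permutation $\sigma_j$, and the $w_{i,\ell}$'s used for each client $i$ are explicitly chosen distinct. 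This saturates all of $V$, giving $|M| = nm$.

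For the backward direction, let $M$ be a matching of size $nm$; then every $v_{i,j}$ is matched either to some $u_{d,j}$ (with $d \le d_{i,j}$) or to some $w_{i,\ell}$. Fix a day $j$ and let $S_j = \{i : v_{i,j}$ is matched to some $u_{d,j}\}$ and $T_j = \{1,\dots,n\}\setminus S_j$. For $i \in S_j$, set $\sigma_j(i) := d$ where $v_{i,j}$ is matched to $u_{d,j}$; since $M$ is a matching, these values are pairwise distinct. For $i \in T_j$, assign the $|T_j|$ remaining positions of $\{1,\dots,n\}$ arbitrarily and bijectively. This yields a permutation $\sigma_j$, and every $i \in S_j$ satisfies $\sigma_j(i) = d \le d_{i,j}$, hence is satisfied on day $j$. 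Finally, for each client~$i$, the vertices $v_{i,1},\dots,v_{i,m}$ are matched to $u$-vertices on days in $\{j : i \in S_j\}$ and to $w_{i,\ell}$-vertices otherwise; since there are only $m-k$ such $w_{i,\ell}$-vertices, at most $m-k$ days fall in the latter category, so $i$ is satisfied on at least $k$ days.

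The main thing to get right is the backward direction: one must check that completing the partial assignment given by $S_j$ to a permutation of $\{1,\dots,n\}$ is always possible and does not spuriously satisfy or unsatisfy clients in a way that breaks the count. This works cleanly because $|T_j|$ exactly matches the number of positions left unused by $S_j$, and because assigning a client in $T_j$ to any leftover position only helps (it cannot turn a satisfied day into an unsatisfied one for $S_j$); the equity guarantee is secured purely by the cardinality bound $|W| = n(m-k)$ forcing each client to be matched via the $u$-side on at least $k$ days.
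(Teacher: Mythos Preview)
Your proof is correct and follows essentially the same strategy as the paper: in both directions you translate satisfied jobs to $u$-edges and unsatisfied jobs to $w$-edges, using that $|V|=nm$ forces $V$ to be saturated and that each client has only $m-k$ available $w$-vertices. The only cosmetic difference is that the paper first normalizes the matching (via the nested-neighborhood observation $N(u_{d+1,j})\subseteq N(u_{d,j})$) so that the matched $u$-vertices on each day form a prefix, whereas you simply assign the clients in $T_j$ to the leftover positions arbitrarily; your way is arguably cleaner, since the equity bound follows directly from the cardinality of $W$ regardless of which positions the $T_j$ clients land in.
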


\begin{proof}
$(\Leftarrow)$: Let $\{\sigma_1,\ldots,\sigma_m\}$ be a set of schedules where no client is unsatisfied on more than $m-k$ days.
Consider the job of client $i$ on day $j$, for some $i \in \{1,\ldots,n\}$ and
$j \in \{1,\ldots,m\}$. Note that the completion time of this job is
$C=C_{i,j}=\sigma_j(i)$. If $C \leq d_{i,j}$, then there is an edge
$\{v_{i,j},u_{C,j}\}$ in $G$, and we add this edge to the matching. If $C >
d_{i,j}$, then client $i$ is unsatisfied on day $j$. Let $\ell$ denote the
number of days prior to $j$ that client $i$ is unsatisfied. Then $\ell < m-k$,
since otherwise client $i$ would be unsatisfied in more than $m-k$ days
(including day $j$). We add the edge $\{v_{i,j},w_{i,\ell+1}\}$ to the matching.
In total, this gives us a matching of size $nm$ in $G$.

$(\Rightarrow)$: Assume that $G$ contains a matching of size $nm$. We create a set of schedules $\{\sigma_1,\ldots,\sigma_m\}$ accordingly. First note that the fact that $G$ is bipartite with one part being $V$, and $|V|=mn$, implies that every vertex in $V$ has to be matched. Let $v_{i,j}\in V$. Then this vertex is either matched to a vertex in $U$ or a vertex in $W$.
Because $N (u_{d + 1,j}) \subseteq N (u_{d,j})$ holds for every $j \in \{1,  \ldots, m\}$ and $d \in \{1, \ldots, d_j^* -1\} $, we may assume that for every $j \in \{1, \ldots, m\}$, there exists some $d_j$ such that vertex $u_{d, j}$ is matched if and only if $d \le d_j$.
\begin{itemize}
\item Suppose that $v_{i,j}$ is matched to some $u_{d,j_0}\in U$. Observe that $j=j_0$ and $d \leq d_{i,j}$ by construction of $G$. We set $\sigma_j(i)=d$, and so client $i$ is satisfied on day $j$. Observe that the fact that $u_{d,j_0}$ cannot be matched to any other vertex in $V$ guarantees that $\sigma_j(i) \neq \sigma_j(i_0)$ for any $i_0 \neq i$. Let $s_j$ denote the number of clients satisfied by $\sigma_j$ in this way.  
\item Suppose that $v_{i,j}$ is matched to some vertex $w_{i_0,j_0}\in W$. Note
that $i=i_0$ and $j_0 \leq m-k$ by construction of $G$. Let $x_{i,j} = |\{i_0 < i : w_{i_0,j} \text{ is matched }\}|$. Then we set $\sigma_j(i)=s_j+x_{i,j}+1$.
\end{itemize}
After each $\sigma_j$ is permutation from $\{1,\ldots,n\}$ to $\{1,\ldots,n\}$, and no client is unsatisfied in more than $m-k$ days under $\{\sigma_1,\ldots,\sigma_m\}$.
\end{proof}

Since we may assume that $d^*_j \le n$ for every $j\in \{1, \ldots, m\}$, we observe that $G$ has $O(mn)$ vertices and $O(mn^2 + m^2n)$ edges, and it can be constructed in $O(mn^2 + m^2n)$ time. Using the algorithm of \citet{hopcroft1973n} for \textsc{Bipartite Maximum Matching}, this gives us the following:

\begin{theorem}
\label{thm:unitprocgeneral}%
ESUP can be solved in $O((n+m)\cdot(nm)^\frac{3}{2})$ time.
\end{theorem}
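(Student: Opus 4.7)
The plan is to combine Lemma~\ref{lem:matchingcorrect} with the Hopcroft--Karp bipartite matching algorithm, essentially just carefully bookkeeping the size of the graph $G$ constructed above. First I would observe that a $k$-equitable set of schedules is exactly a set of schedules in which no client is unsatisfied on more than $m-k$ days, so by Lemma~\ref{lem:matchingcorrect} the original instance is a yes-instance if and only if $G$ admits a matching of size $nm$. Hence it suffices to construct $G$ and run a maximum bipartite matching algorithm.

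Next I would justify the normalization $d^*_j \le n$ used in the statement: any deadline exceeding $n$ on day $j$ can be safely capped at $n$, since at most $n$ jobs are processed on any given day, and the reduction itself is oblivious to whether a vertex $u_{d,j}$ with $d > n$ exists. With this in hand, the vertex count is $|V| + |U| + |W| = nm + \sum_{j=1}^m d_j^* + n(m-k) = O(nm)$, and the edge count is at most $n m (m-k)$ for the $V$--$W$ edges together with at most $\sum_{i,j} d_{i,j} \le n^2 m$ for the $V$--$U$ edges, giving $O(n^2 m + n m^2) = O(nm(n+m))$ edges. The construction itself can clearly be done within this same time bound by iterating over all $(i,j)$ pairs.

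Finally I would invoke the algorithm of \citet{hopcroft1973n}, which computes a maximum matching in a bipartite graph in $O(|E|\sqrt{|V|})$ time. Plugging in $|V| = O(nm)$ and $|E| = O(nm(n+m))$ yields running time $O\bigl(nm(n+m) \cdot \sqrt{nm}\bigr) = O\bigl((n+m) \cdot (nm)^{3/2}\bigr)$, which matches the claimed bound and dominates the construction time. There is no real obstacle here beyond being careful with the edge-count estimate (in particular using $d_{i,j} \le n$ for the $V$--$U$ side) so that the $(n+m)(nm)^{3/2}$ factor comes out rather than a looser $(nm)^2$-style bound.
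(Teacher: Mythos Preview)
Your proposal is correct and follows essentially the same approach as the paper: invoke Lemma~\ref{lem:matchingcorrect}, normalize deadlines to at most $n$, bound $|V(G)|=O(nm)$ and $|E(G)|=O(nm(n+m))$, and apply Hopcroft--Karp. Your edge-count bookkeeping is in fact slightly more explicit than the paper's, but there is no substantive difference.
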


We remark that this algorithm is very flexible and can easily be extended to the
setting where the jobs have release dates and where there are multiple parallel
machines.
The main idea is that the vertices in $U$ represent time slots for
jobs and a job has an edge to all time slots that are before the job's deadline.
To incorporate release dates we additionally remove edges to time slots that are
``too early''. Finally, to model parallel machines, we essentially introduce
copies of the vertices in $U$ for each of the machines.
 We give a formal description on how to construct the graph $G$ in which we
compute a maximum bipartite matching in the following.

Recall that $p_{i,j}$ and $d_{i,j}$ respectively denote the processing time and
deadline of the job of client $i$ on day~$j$, and that $k$ is the equity
parameter. 
Let $r_{i,j}< d_{i,j}$ the release date of the job of client~$i$ on
day $j$ and $x_j$ the number of parallel machines available on day $j$.
Let $d^*_j = \max_{1 \leq i \leq n} d_{i,j}$ denote the maximal deadline on day
$j \in \{1,\ldots,m\}$ and let $x^* = \max_{1 \leq j \leq m} x_j$ denote the
maximal number of machines available on a day.
We create an undirected graph $G$ with the following vertices:
\begin{itemize}
\item For each $i\in \{1,\ldots,n\}$ and each $j\in \{1,\ldots,m\}$, we create
a vertex $v_{i,j}$. The set of vertices $V = \{v_{i,j} : 1 \leq i \leq n, 1
\leq j \leq m\}$ represents all input jobs of all clients.
\item For each $d\in \{1,\ldots,d^*_j\}$, each $j\in \{1,\ldots,m\}$, and each
$x\in\{1,\ldots,x^\star\}$, we create a vertex $u_{d,j,x}$. The set
$U=\{u_{d,j,x}\mid  1 \leq d \leq d^*_j, 1\leq j \leq m, 1\le x\le x^\star\}$
represents all possible completion times on some machine of the all input jobs
that meet their deadline.
\item For each $i\in \{1,\ldots,n\}$ and each $j \in \{1,\ldots,m-k\}$, we
create a vertex $w_{i,j}$. The set $W = \{w_{i,j} : 1 \leq i \leq n, 1 \leq j
\leq m-k\}$ represents the set of jobs that exceed their deadline.
\end{itemize}
The edges of $G$ are constructed as follows. For each $i\in \{1,\ldots,n\}$ and
each $j\in \{1,\ldots,m\}$ we connect $v_{i,j}$ to:
\begin{itemize}
\item vertices $w_{i,1},\ldots, w_{i,m-k}$, and
\item vertices $u_{r,j,1}, \ldots, u_{d,j,x}$, where $d=d_{i,j}$, $r=r_{i,j}+1$,
and $x=x_j$.
\end{itemize}

The proof of correctness is analogous to the proof of
Lemma~\ref{lem:matchingcorrect}. We omit the details.

\section{Single Deadline on Each Day}
In this section, we investigate the computational complexity of \ESSD.

\subsection{Hardness Results}
We first show that \ESSDabbrv is NP-hard even if all numbers involved are small constants.
\begin{theorem}\label{thm:ESSDhardness}
  \ESSDabbrv is NP-hard even if ${k=1}$ and ${d=3}$.
\end{theorem}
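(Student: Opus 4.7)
The plan is to reduce from \textsc{Exact Cover by 3-Sets} (X3C), which is well-known to be strongly NP-hard. Recall that an X3C instance consists of a universe $U$ with $|U| = 3q$ and a collection $\mathcal{C} = \{S_1, \ldots, S_t\}$ of 3-element subsets of~$U$, and asks whether there is a subcollection $\mathcal{C}' \subseteq \mathcal{C}$ of size exactly $q$ that partitions~$U$.

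From such an instance I would construct an \ESSDabbrv instance with $m = t$ days, equity parameter $k = 1$, common deadline $d_j = 3$ on every day, and $n = 3q + (t - q)$ clients of two kinds: one \emph{element-client} per $u \in U$ and $t - q$ auxiliary \emph{budget clients}. The processing times simultaneously encode set membership and enforce a budget: for the element-client associated with $u \in U$, I would set $p_{u,j} = 1$ if $u \in S_j$ and $p_{u,j} = 4$ otherwise, while for every budget client $b$, I would set $p_{b,j} = 3$ on every day.

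The key structural observation is that a budget client can be satisfied on a day $j$ only by being scheduled first (since $p_{b,j} = 3 = d_j$), which then prevents any other client from being satisfied on that day. Thus, satisfying all $t - q$ budget clients consumes $t - q$ distinct ``wasted'' days, leaving at most $q$ days usable for element-clients. On any usable day $j$, only element-clients with $u \in S_j$ (those with processing time $1$) can be satisfied, and at most three of them can be, since their total processing time must be at most $3$. Since all $3q$ element-clients must be covered by at most $q$ usable days of capacity $3$, the budget is tight and forces exactly $q$ usable days, each satisfying three element-clients that correspond to three pairwise-disjoint elements of $U$ lying in a common set $S_j$, i.e., an exact $3$-cover.

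The forward direction of the resulting equivalence is routine: from an exact cover $\mathcal{C}'$, schedule the three element-clients of each chosen $S_j \in \mathcal{C}'$ first on day~$j$, and place one budget client first on each of the remaining $t - q$ days. The main obstacle, and the part requiring most care, is the converse: one must verify that budget clients genuinely cannot share a day and that the counting $3q = 3 \cdot q$ forces the sets read off from the usable days to be pairwise disjoint rather than merely covering $U$. Both parts only require checking completion-time inequalities, and since all processing times introduced are constants, the reduction in fact yields strong NP-hardness as well.
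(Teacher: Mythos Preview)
Your reduction is correct and yields the stated theorem (with the harmless proviso that one may assume $t\ge q$, since otherwise the X3C instance is a trivial NO). The argument for the converse direction is sound: a budget client can only be satisfied if scheduled first, which blocks every other client on that day, so the $t-q$ budget clients absorb $t-q$ distinct days; the remaining $\le q$ days each satisfy at most the three element-clients of $S_j$, and the tight count $3q=3\cdot q$ forces disjointness.

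The paper takes a different route, reducing from \textsc{Independent Set} on $3$-regular graphs rather than X3C. There, one creates a client per vertex and a client per edge; on day~$i$ the vertex client~$c_{v_i}$ has processing time~$3$ while the three incident edge clients have processing time~$1$, so day~$i$ can either serve $c_{v_i}$ alone or all three of its incident edges. An additional block of $|V|-\ell$ ``overflow'' days (on which every vertex client has processing time~$3$ and every edge client has processing time~$4$) absorbs the vertex clients not placed on their own day, forcing at least $\ell$ vertices to use their dedicated day; independence then follows because an edge with both endpoints selected has nowhere to go. Structurally the two reductions are close cousins---both exploit that a processing-time-$3$ job monopolises a day with deadline~$3$---but your separation of ``budget'' clients from ``element'' clients is cleaner than the paper's dual-purpose vertex clients, and the X3C source makes the capacity-$3$-per-day arithmetic more transparent. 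The paper's reduction, on the other hand, ties the hardness to graph structure, which may be a more natural starting point if one later wants to strengthen the result under further restrictions on the instance.
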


\begin{proof}
  We present a reduction from \textsc{Independent Set} on 3-regular graphs. In this problem we are given a graph $G=(V,E)$ where every vertex has degree 3 and an interger $\ell$ and are asked whether $G$ contains an independent set of size $\ell$. This problem is known to be NP-hard~\cite{GJS74}.
  Let $(G=(V,E), \ell)$ be an instance of \textsc{Independent Set}, where $G$ is an $3$-regular graph.

  We construct an instance of \ESSDabbrv as follows. We set $k=1$ and $d=3$.
  For each vertex~$v\in V$, we add a client~$c_v$, and for each edge $e\in E$, we add a client~$c_e$.
  There exist $m=2|V| -\ell$ many days.
  We order the vertices arbitrarily, i.e., $V = \{v_1, \dots, v_n\}$.
  On day $i$, the job of client~$c_{v_i}$ has processing time $3$, while the jobs of client $c_{v}$ with $v\neq v_i$ have processing time~$4$.
  The jobs of clients $c_e$ have processing time $1$ if $e$ is incident to $v_i$ and processing time $4$ otherwise.
  Finally, on days $|V| + 1$ to $2|V| - \ell$, jobs of clients $c_v$ for $v\in V$ have processing time $3$, and jobs of clients $c_e$ for $e\in E$ have processing time $4$.

  $(\Rightarrow)$: An independent set $I$ of size $\ell$ implies a feasible schedule by scheduling the jobs of the $|V| -\ell$ clients~$c_v$ with $v\in V\setminus I$ to the days $|V | + 1$ to $2|V| - \ell$.
  For each $v_i\in I$, we schedule the job of client $c_{v_i}$ to Day $i$.
  For all clients $c_e$ we schedule one of their jobs on a day corresponding to an arbitrary endpoint of $e$ that is not contained in the independent set~$I$.

  $(\Leftarrow)$:  Note that every feasible schedule must schedule a job of client $c_{v_i}$ to day $i$ for at least~$\ell$ different clients $c_{v_i}$.
  We claim that $I := \{v_i \in V \mid c_{v_i} \text{ has a job}$ $\text{that is scheduled on day } i\}$ is an independent set.
  If there exists an edge $e= \{v_i, v_{i'}\}$ for $v_i, v_{i'}\in I$, then this edge can be scheduled neither on a day $i$ nor on day $i'$ and therefore cannot be scheduled on any day.
  Thus, these at least $\ell$ vertices in $I$ form an independent set.
\end{proof}

We can further show that \ESSDabbrvstar (i.e., \ESSDabbrv where the processing time of the job of each client is the same every day) is NP-hard and W[1]-hard when parameterized by the number of days.\footnote{Parameterized complexity studies of NP-hard scheduling problems gained more and more interest over the recent years~\cite{BBGKN21,BBN19,GHM20,HMPSY20,HPST19,HST19,MB18}; we contribute to this field with several of our results.}
\begin{theorem}\label{thm:sameduedatehardness}
\ESSDabbrvstar is NP-hard and W[1]-hard when parameterized by the number $m$ of days even if $k=1$ and all numbers are encoded in unary.
\end{theorem}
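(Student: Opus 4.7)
The plan is to reduce from \textsc{Unary Bin Packing}, which is known to be strongly NP-hard and W[1]-hard when parameterized by the number~$m$ of bins (Jansen, Kratsch, Marx, Schlotter). An instance consists of item sizes $s_1,\dots,s_n$ given in unary, a bin capacity~$B$, and a number~$m$ of bins; the question is whether the items can be partitioned into at most~$m$ groups each of total size at most~$B$.

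Given such an instance, I would construct an \ESSDabbrvstar instance with $n$~clients, $m$~days, equity parameter $k=1$, common deadline $d = B$ on every day, and processing time $p_i = s_i$ for client $i$ (identical on every day, as required by the starred variant). All numbers are transferred directly, so the encoding stays in unary. The number of days in the constructed instance equals the number of bins in the source instance, which is what yields the W[1]-hardness for the parameter~$m$.

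For correctness, I would use the following structural observation. On any day~$j$, because all jobs share the single deadline~$d=B$, the set~$S_j$ of clients satisfied on day~$j$ is exactly some subset with $\sum_{i \in S_j} p_i \le B$: if the total size is at most~$B$, scheduling the clients of~$S_j$ first makes every one of them meet the deadline; conversely, the last satisfied client's completion time equals~$\sum_{i\in S_j} p_i$, so this sum must be at most~$B$. With $k=1$, a $k$-equitable solution exists iff $\bigcup_{j=1}^m S_j = \{1,\dots,n\}$, and, since each client only needs to appear once, we can replace this cover by a partition without loss of generality. This is exactly a feasible packing into~$m$ bins of capacity~$B$. The converse direction is immediate: a valid bin packing $B_1,\dots,B_m$ yields schedules in which, on day~$j$, the clients of~$B_j$ are placed first (in any order) and the remaining clients afterwards.

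There isn't really a deep obstacle here once the right source problem is identified: the reduction is a one-to-one translation of items/bins into clients/days, the bound $d=B$ collapses the ``satisfied on day~$j$'' condition to a knapsack inequality, and $k=1$ turns coverage into partitioning. The one point to be careful about is making sure everything stays unary (processing times equal item sizes, the deadline equals~$B$, and the only other integer is~$m=$ number of days), and that the parameter matches ($m$~bins becomes $m$~days), so that the W[1]-hardness of \textsc{Unary Bin Packing} transfers verbatim to the claimed parameterization.
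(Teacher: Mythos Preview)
Your reduction from \textsc{Unary Bin Packing} is correct and is essentially identical to the paper's own proof: same source problem (citing Jansen et al.), same translation of items to clients with $p_i=s_i$, bins to days with $m=b$, common deadline $d=B$, and $k=1$, with the same correctness argument in both directions. There is nothing to add.
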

\begin{proof}
We present a parameterized reduction from \textsc{Unary Bin Packing}, where given a set $I=\{1, \ldots, n\}$ of items with sizes $s_i$ for $i\in I$, $b$ bins of size $B$, we are asked to decide whether it is possible to distribute the items to the bins such that no bin is overfull, i.e., the sum of the sizes of items put into the same bin does not exceed $B$. \textsc{Unary Bin Packing} is known to be NP-hard and W[1]-hard when parameterized by $b$~\cite{jansen2013bin}. Given an instance of \textsc{Unary Bin Packing}, we construct an instance of \ESSDabbrvstar as follows.

We set the number of days to $b$, i.e., $m=b$. For each item $i\in I$ and each day $j\in\{1, \ldots, m\}$ we create a job for client~$i$ with processing time $p_{i,j}=s_i$, i.e., the processing time is the same every day. The deadline for every day is $d=B$. Finally, we set $k=1$. This finishes the construction which can clearly be done in polynomial time.

$(\Rightarrow)$: Assume the \textsc{Unary Bin Packing} instance is a YES-instance. Then there is a distribution of items to bins such that no bin is overfull. If item $i\in I$ is put into the $j$th bin for some $j\in \{1,\ldots, b\}$, then we schedule the job of of client $i$ on day $j$ to be processed. Since every item is put into one bin, every client has a job that is scheduled to be processed at one day and since no bin is overfull, all scheduled jobs can be processed before their deadline $d$. It follows that we have a \equitable{1} set of schedules.

$(\Leftarrow)$: Assume we have a \equitable{1} set of schedules. Then every client has a job on at least one day that is processed. Let client $i$ have a job that is processed on day $j$. Then we put item~$i$ into the $j$th bin. Since the processing time $p_{i,j}$ is the same as the size of item $i$ and the sum of the processing times of jobs that are scheduled to be processed on the same day is at most $d=B$, the sum of sizes of items that are put into the same bin is at most $B$. Hence, we have a valid distribution of items into bins.
\end{proof}

\subsection{Algorithmic Results}
We first show that we can solve \ESSDabbrv in pseudo-polynomial time if the number of days $m$ is constant. Note that this implies that \ESSDabbrv is in XP when parameterized by the number of days if all processing times and the deadline is encoded unary. Theorem~\ref{thm:sameduedatehardness} shows that we presumably cannot expect to be able to obtain an FPT algorithm for this case.
\begin{theorem}\label{thm:DP1}
\ESSDabbrv can be solved in $O(d_{\max}^m\cdot \binom{m}{k}\cdot n)$ time, where $d_{\max}=\max_j d_j$.
\end{theorem}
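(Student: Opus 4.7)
The plan is to design a dynamic programming algorithm whose state tracks, for every day $j \in \{1,\ldots,m\}$, the total processing time already consumed on day~$j$ by the clients that will be satisfied on that day. The first observation I would make is that, because every day has a single common deadline~$d_j$, a subset~$S_j$ of clients can be made to meet the deadline on day~$j$ if and only if $\sum_{i\in S_j} p_{i,j}\le d_j$ (schedule $S_j$ first in any order, then the rest). Hence the problem reduces to choosing, for each client~$i$, a set $J_i\subseteq\{1,\ldots,m\}$ of days on which $i$ is satisfied, subject to $|J_i|\ge k$ and $\sum_{i : j\in J_i} p_{i,j}\le d_j$ for every day~$j$. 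A second, crucial reduction is that we may assume $|J_i|=k$ for every client: a client who is not satisfied on a given day contributes nothing to that day's load, so shrinking some $J_i$ to exactly $k$ days preserves feasibility.

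With these reductions I would define the DP
\[
T\bigl(i,\,t_1,\ldots,t_m\bigr)\in\{\textrm{True},\textrm{False}\},
\]
where $T(i,t_1,\ldots,t_m)$ is True iff the first $i$ clients can each be assigned a set of exactly $k$ days on which they are satisfied, such that the total processing time used on day~$j$ is exactly $t_j$ and $t_j\le d_j$ for all~$j$. The base case is $T(0,0,\ldots,0)=\textrm{True}$, and the recurrence is
\[
T(i,t_1,\ldots,t_m)=\bigvee_{\substack{J\subseteq\{1,\ldots,m\}\\ |J|=k}} T\bigl(i-1,\,t_1-p_{i,1}\cdot[1\in J],\,\ldots,\,t_m-p_{i,m}\cdot[m\in J]\bigr),
\]
where a term on the right is only considered when every coordinate remains nonnegative (equivalently, there is enough remaining capacity on each day in~$J$ to host client~$i$). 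The instance is a YES-instance iff $T(n,t_1,\ldots,t_m)=\textrm{True}$ for some tuple with $t_j\le d_j$ for all~$j$.

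Correctness in the forward direction follows by induction on~$i$ using the reduction to $|J_i|=k$ described above; in the reverse direction, given a True entry at level~$n$, one recovers the sets $J_i$ by backtracking through the choice of~$J$ at each step and then constructs the schedule $\sigma_j$ on day~$j$ by placing the clients of $\{i : j\in J_i\}$ first (in arbitrary order) followed by the remaining clients. For the running time, the number of reachable states is bounded by $\prod_{j=1}^m (d_j+1)=O(d_{\max}^m)$, each state is updated once per client, and each update enumerates the $\binom{m}{k}$ subsets~$J$; this yields the claimed $O(d_{\max}^m\cdot \binom{m}{k}\cdot n)$ bound.

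The main subtlety, and the only place where I expect to spend real care, is the reduction to $|J_i|=k$: I would state it as a short observation (if a schedule satisfies client~$i$ on $k'>k$ days, arbitrarily demote $k'-k$ of those days to the post-deadline portion, which only reduces loads), so that the DP may enumerate only size-$k$ subsets and match the target running time. Everything else is a routine verification that the recurrence captures exactly the feasibility condition and that the state space and transition count produce the advertised complexity.
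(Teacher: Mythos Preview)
Your proposal is correct and follows essentially the same dynamic-programming approach as the paper: process clients one at a time, maintain a vector of per-day loads (the paper equivalently tracks remaining budgets $b_j$ rather than used loads $t_j$, so its base case is $T[0,b_1,\ldots,b_m]=\texttt{true}$ for all budgets and its answer is simply $T[n,d_1,\ldots,d_m]$), and for each client enumerate the $\binom{m}{k}$ choices of days on which that client is satisfied. Your explicit justification of the reduction to $|J_i|=k$ is in fact slightly more careful than the paper, which leaves that point implicit in the recurrence.
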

\begin{proof}
We give a dynamic programming algorithm for this problem.
The table $T$ maps from $\{0,\ldots,n\}\times \{1,\ldots, d_{\max}\}^m$ to $\{\texttt{true}, \texttt{false}\}$ and intuitively an entry $T[i,b_1, \ldots, b_m]$ is true if and only if it is possible to schedule $k$ jobs of each client $\{1,\ldots,i\}$ such that the total processing time on day $j$ is at most $b_j$ for all days $j\in \{1,\ldots, m\}$.
Formally, the table is defined as follows.
\begin{align*}
T[0,b_1, \ldots, b_m] &= \texttt{true}\\
T[i,b_1, \ldots, b_m] &= 
\end{align*}
\[\bigvee_{\{x_1, \ldots, x_k\}\in\binom{\{1,\ldots, m\}}{k}} T[i-1,\ldots, b_{x_1}-p_{i,x_1},\ldots, b_{x_k}-p_{i,x_k},\ldots]\]
Intuitively, we ``guess'' on which days we want to schedule jobs of client $i$ and then look up whether there exists a set of \equitable{k} schedules for clients $\{1,\ldots,i-1\}$ to which we can add $k$ jobs of client~$i$ such that the total processing time comply with the upper bounds. 

It is easy to check that the input instance is a YES-instance if and only if $T[n,d_1,\ldots,d_m]=\texttt{true}$. The size of the table $T$ is in $O(d_{\max}^m\cdot n)$ and computing one table entry takes $O(\binom{m}{k})$ time. Hence, we arrive at the claimed running time.
\end{proof}

Next, we show that \ESSDabbrv can be solved in polynomial time if the number of clients $n$ is constant. In other words, we show that \ESSDabbrv is in XP when parameterized by the number of clients.
\begin{theorem}\label{thm:ESSDXP2}
\ESSDabbrv can be solved in $O((2k+2)^{n}\cdot m)$ time. 
\end{theorem}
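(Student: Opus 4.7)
The plan is to reduce \ESSDabbrv to a cleaner combinatorial question and solve it by a forward dynamic program over the days. The key structural observation is that, since all jobs share a single deadline $d_j$ on day $j$, the set of clients satisfied on that day can be an arbitrary subset $S_j\subseteq\{1,\ldots,n\}$ with $\sum_{i\in S_j}p_{i,j}\le d_j$: schedule the clients in $S_j$ first in any order (they all complete by time $\sum_{i\in S_j}p_{i,j}\le d_j$) and place the remaining clients after them in any order. Thus the task becomes: choose, for each $j\in\{1,\ldots,m\}$, a subset $S_j$ feasible in this sense so that each client $i$ belongs to at least $k$ of the $S_j$.

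Next I would set up a dynamic program whose layers are indexed by the days. The state at layer $j$ is a vector $\mathbf{c}=(c_1,\ldots,c_n)\in\{0,1,\ldots,k\}^n$, where $c_i$ records the number of days among $1,\ldots,j$ on which client $i$ has been satisfied, capped at $k$ (since additional satisfactions beyond $k$ are irrelevant). Let $T_j[\mathbf{c}]\in\{\mathrm{true},\mathrm{false}\}$ be true iff there exist feasible subsets $S_1,\ldots,S_j$ realizing the capped counts $\mathbf{c}$. The base case is $T_0[(0,\ldots,0)]=\mathrm{true}$, and the input is a YES-instance iff $T_m[(k,\ldots,k)]=\mathrm{true}$.

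The transition propagates forward: for every $\mathbf{c}$ with $T_{j-1}[\mathbf{c}]=\mathrm{true}$ and every subset $S\subseteq\{1,\ldots,n\}$ with $\sum_{i\in S}p_{i,j}\le d_j$, set $T_j[\mathbf{c}^{+S}]:=\mathrm{true}$, where $\mathbf{c}^{+S}$ is obtained from $\mathbf{c}$ by replacing $c_i$ with $\min(c_i+1,k)$ for each $i\in S$ and leaving the other entries unchanged. Correctness follows directly from the characterization in the first paragraph, by induction on $j$. For the running time, there are at most $(k+1)^n$ states per layer and, for each state, at most $2^n$ subsets to consider (feasibility is one summation check over at most $n$ terms, which can be absorbed since we only need to enumerate subsets whose total processing time fits). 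This yields $(k+1)^n\cdot 2^n=(2k+2)^n$ work per day and $O((2k+2)^n\cdot m)$ overall, as claimed.

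There is essentially no obstacle here; the only subtlety is the capping at $k$, which I would deliberately handle via the forward-update formulation above in order to avoid the minor ambiguity that arises in a backward recurrence (namely that an entry with $c_i=k$ and $i\in S$ could a priori correspond to a previous capped count of either $k-1$ or $k$). Writing the recurrence in the push-style form sidesteps this issue entirely.
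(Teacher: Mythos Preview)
Your proposal is correct and follows essentially the same approach as the paper: a dynamic program over the days whose state is the vector of satisfaction counts in $\{0,\ldots,k\}^n$, with one subset of clients chosen per day subject to the deadline constraint, yielding $(k+1)^n\cdot 2^n=(2k+2)^n$ work per day. The only cosmetic difference is that the paper phrases it as a pull-style recurrence tracking \emph{exact} (uncapped) counts and checks $T[m,k,\ldots,k]$---which is valid because any $k$-equitable schedule can be trimmed to one where each client is satisfied exactly $k$ times---whereas you use a push-style update with explicit capping; both formulations are equivalent and give the same bound.
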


\begin{proof}
We give dynamic programming algorithm for this problem. The table $T$ maps from $\{1,\ldots,m\}\times
\{0,\ldots,k\}^{n}$ to $\{\texttt{true}, \texttt{false}\}$. Entry $T[j,\ell_{1},\ldots,\ell_{n}]$ is true 
if and only if it is possible to provide schedules for days $\{1,\ldots ,j\}$ such that for each client $i$ we have that its job is processed on exactly $\ell_{i}$ days ($\ell_{i}\leq k$). We have that 
\begin{align*}
T[j,0,\ldots,0]&=\texttt{true} \text{ for all } j\in\{1,\ldots,n\}\text{, and}\\
T[j,\ell_{1},\ldots,\ell_{n}]&=\texttt{true}
\end{align*}
if there exists a subset of clients $I^*\subseteq \{1,\ldots,n\}$ such that 
\[
\sum_{i\in I^*}p_{i,j}\leq d_j
\]%
and that 
\[
T[j-1,\ell_{1}-\mathbb{I}_{1\in I^*},\ldots,\ell_{n}-\mathbb{I}_{n\in I^*}]=\texttt{true}
\]%
where $\mathbb{I}_{i\in I^*}$ is an indicator variable for the event that $i\in I^*$, i.e.,
\[
\mathbb{I}_{i\in I^*}=\left\{ 
\begin{array}{cc}
1\text{ } & \text{if } i\in I^* \\ 
0\text{ } & \text{otherwise}%
\end{array}%
\right. 
\]%

It is easy to check that the input instance is a YES-instance if and only if $T[m,k,\ldots,k]=\texttt{true}$. The theorem follows from the fact that we have $O(m\cdot (k+1)^{n})$ many $T[j,\ell_{1},\ldots,\ell_{n}]$ values to compute and $O(2^{n})$ possible $I^*$
subsets to check for calculating any $T[j,\ell_{1},\ldots,\ell_{n}]$ value.
\end{proof}

We now strengthen the above result by showing that \ESSDabbrv is in FPT when
parameterized by $n$. To do this, we give an integer linear programm formulation
for the problem and use a famous result by~\citet{lenstra1983integer}. Note,
however, that Theorem~\ref{thm:ESSDXP2} is a purely combinatorial result and that the
implicit running time of Theorem~\ref{thm:ESSDILP} is at least double exponential.

\begin{theorem}\label{thm:ESSDILP}
  \ESSDabbrv is in FPT when parameterized by the number of clients $n$.
\end{theorem}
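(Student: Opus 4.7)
The plan is to reduce \ESSDabbrv to an integer linear program whose number of variables depends only on $n$, and then invoke \citet{lenstra1983integer}'s result that ILP feasibility is fixed-parameter tractable when parameterized by the number of variables.

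The key observation is that, since all jobs on a day share the same deadline $d_j$, the only information about day $j$ that matters for scheduling is the family
\[
  \mathcal{F}_j \;=\; \Bigl\{\, S \subseteq \{1,\ldots,n\} \;\Bigm|\; \sum_{i \in S} p_{i,j} \le d_j \,\Bigr\}
\]
of subsets of clients that can be \emph{simultaneously satisfied} on day $j$. In any solution, the choice on day $j$ amounts to selecting one member of $\mathcal{F}_j$ (namely, the set of clients whose job completes by the deadline). Since every $\mathcal{F}_j$ is a downward-closed subfamily of $2^{\{1,\ldots,n\}}$, there are at most $2^{2^n}$ possible \emph{day types}. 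I would partition the $m$ days according to this type: for each type $t$ that actually occurs, let $m_t$ denote the number of days of that type, so that $\sum_t m_t = m$.

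Now I set up the ILP. For every occurring type $t$ and every $S \in \mathcal{F}_t$, introduce a nonnegative integer variable $y_{t,S}$ representing the number of days of type $t$ on which the satisfied clients are exactly $S$. The constraints are
\begin{align*}
  \sum_{S \in \mathcal{F}_t} y_{t,S} &= m_t && \text{for each type $t$,} \\
  \sum_{t} \sum_{S \in \mathcal{F}_t,\; i \in S} y_{t,S} &\ge k && \text{for each client $i \in \{1,\ldots,n\}$,} \\
  y_{t,S} &\ge 0 && \text{for each $(t,S)$.}
\end{align*}
Feasibility of this ILP is equivalent to the existence of a $k$-equitable set of schedules: from a feasible solution, recover a schedule by, on each day of type $t$, picking any satisfied set $S$ for which $y_{t,S}$ has not yet been exhausted and scheduling those jobs first (in any order), followed by the remaining jobs.

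The total number of variables is at most $\sum_t |\mathcal{F}_t| \le 2^{2^n} \cdot 2^n$, a function of $n$ alone, and the coefficients of the ILP are polynomially bounded in the input size. Thus Lenstra's algorithm decides feasibility in time $f(n) \cdot \lvert I\rvert^{O(1)}$, proving the theorem. The only subtle step is the grouping by type, but this is purely bookkeeping: computing $\mathcal{F}_j$ for each day takes $O(2^n \cdot n)$ time, and identifying equal families can be done by hashing, so the preprocessing also fits into the FPT budget.
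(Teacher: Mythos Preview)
Your proposal is correct and follows essentially the same approach as the paper's proof: your ``day types'' are exactly the paper's equivalence classes, your variables $y_{t,S}$ correspond to the paper's $x_{E,S}$ (with the paper's constraint $x_{E,S}=0$ for infeasible $S$ replaced by your restriction to $S\in\mathcal{F}_t$), and both arguments conclude via \citet{lenstra1983integer} using the same $2^{2^n}\cdot 2^n$ bound on the number of variables.
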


\begin{proof}
  First we partition the days into equivalence classes.
  We say that two days $j$ and $j'$ are equivalent if for any subset~$S$ of clients all jobs of $S$ can be scheduled together on Day $j$ if and only if they can be scheduled together on Day $j'$.
  Let $\mathcal{E}$ be the set of equivalence classes.
  Clearly, $|\mathcal{E}| \le 2^{2^n}$.
  We write that $S\succ E$ for a set of clients $S$ and an equivalence class $E$ if the sum of the processing times of all jobs from $S$ exceeds the deadline on every day from $E$.

  We design an ILP with one variable $x_{E, S}$ for each pair of equivalence class $E\in \mathcal{E}$ and subset of clients $S$ from $E$ as follows.
  \begin{align*}
    x_{E, S} = 0 & \qquad \text{if } S\succ E\\
    \sum_{S: i \in S} \sum_{E\in \mathcal{E}} x_{E, S} \ge k & \qquad \forall i \in \{1, \dots, n\}\\
    \sum_{S \subseteq \{1, \dots, n\}} x_{E, S} = |E| & \qquad \forall E \in \mathcal{E}
  \end{align*}

  Since the number of variables is at most $2^n \cdot 2^{2^n}$, it follows by~\citet{lenstra1983integer} that the ILP can be solved in FPT-time parameterized by $n$.

  Given a solution to the ILP, we get a $k$-equitable schedule by scheduling for each variable~$x_{E,S}$ the jobs of $S$ on exactly $x_{E, S}$ days of the equivalence class $E$.
  By the third condition, this results in one schedule for every day.
  By the first condition none of the scheduled jobs is tardy.
  By the second condition, the schedule is $k$-equitable.

  Vice versa, given a $k$-equitable schedule, we construct a feasible solution to the ILP by setting $x_{E, S}$ to be the number of days from equivalence class $E$ scheduling exactly the jobs from~$S$ before the deadline.
  The first condition is then fulfilled by the definition of $S \succ E$.
  The second condition is fulfilled as the schedule is $k$-equitable.
  The third condition is fulfilled as there is exactly one schedule for each day.
\end{proof}

In the remainder of this subsection, we investigate the canonical optimization version of \ESSDabbrvstar where we want to maximize~$k$.
Note that the existence of a polynomial-time approximation algorithm with any factor (i.e., an algorithm computing a solution for an instance $\mathcal{I}$ of value $\text{ALG}(\mathcal{I})$ such that $f(\mathcal{I})\cdot\text{ALG}(\mathcal{I}) \ge \text{OPT}(\mathcal{I})$ for some function $f$) implies P $=$ NP, since distinguishing between the cases $k=0$ and $k=1$ is NP-hard (see Theorem~\ref{thm:sameduedatehardness}).

However, we will show that for any instance with optimal solution value $3k$, we can find a solution of value $2k$.
We make a case distinction on $k$: we first show an algorithm for that case that $k\le\frac{m}{2}$ and afterwards an algorithm for the case of $k>\frac{m}{2}$.

\begin{lemma}\label{lem:apx-small-k}
  Given a YES-instance $\mathcal{I} = (\{p_1, \ldots, p_n\}, m,d, k)$ with $k \le \frac{m}{2}$ of \ESSDabbrvstar, one can compute a solution to the instance $\mathcal{I} ' := (\{p_1, \ldots, p_n\}, m,d, k')$ with $k':=2\lfloor \frac{k}{3}\rfloor$ in $O (n \cdot (k + \log n)) $ time.
\end{lemma}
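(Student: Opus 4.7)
The plan is to construct a $k'$-equitable schedule by a greedy algorithm over a sorted order of the clients. First, sort the clients so that $p_1 \ge p_2 \ge \cdots \ge p_n$, which takes $O(n\log n)$ time. Then process them in this order, and for each client $i$ assign it to $k'$ days with the smallest current load on which client $i$ still fits (i.e., days $j$ with $L_j + p_i \le d$). With a suitable data structure the per-client work is $O(k)$ amortized, giving a total running time of $O(n(k+\log n))$. Since each client is then placed on exactly $k'$ days, $k'$-equitability would be immediate \emph{provided} the greedy step always succeeds; the whole burden of the proof is therefore to show that at every step $i$ at least $k'$ days have load at most $d-p_i$.

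I would argue this by contradiction, using a double-counting of the total load. Suppose that at step $i$ fewer than $k'$ days satisfy $L_j \le d - p_i$; then more than $m-k'$ days have load strictly above $d-p_i$, so the total load exceeds $(m-k')(d-p_i)$. On the other hand, by construction the total load before step $i$ is exactly $k' \sum_{i' < i} p_{i'}$, and the YES-instance assumption yields $\sum_{i'=1}^{n} p_{i'} \le md/k$, hence $\sum_{i' < i} p_{i'} \le md/k - p_i$ since clients are sorted in non-increasing order. Substituting $k' = 2\lfloor k/3 \rfloor$ and exploiting $k \le m/2$, a direct algebraic manipulation yields the desired contradiction whenever $p_i$ is not too large relative to $d$.

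The main obstacle is the case of ``large'' clients, for which the pure volume argument above is too weak. Here I would invoke a structural consequence of the $k$-equitable assumption: since no two clients with $p_i > d/2$ can be scheduled on the same day, at most $m/k$ of the clients can be large, which is comfortably less than $m/k' = 3m/(2k)$. Consequently one can handle the large clients separately by first reserving $k'$ days for each of them without conflict---this is feasible because $(\text{number of large clients}) \cdot k' \le (m/k)(2k/3) = 2m/3 \le m$---and only then run the greedy step above on the remaining small clients, whose processing times satisfy the bound needed for the contradiction argument to go through. Combining both regimes and carefully arranging the data structures should yield the claimed $k'$-equitable schedule within the stated $O(n(k+\log n))$ time bound.
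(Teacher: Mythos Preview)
Your high-level plan---sort by non-increasing processing time, place each client greedily on $k'$ lightly-loaded days, and treat clients with $p_i>d/2$ separately---is close in spirit to what the paper does. However, the double-counting step, as you state it, does not close.

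Concretely, suppose the greedy step fails at client $i$. You derive
\[
(m-k')(d-p_i) \;<\; \text{(total load so far)} \;=\; k'\!\!\sum_{i'<i} p_{i'} \;\le\; k'\Bigl(\tfrac{md}{k}-p_i\Bigr).
\]
For this chain to be a contradiction you need $(m-k')(d-p_i)\ge k'\bigl(\tfrac{md}{k}-p_i\bigr)$. Substituting $k'=2k/3$ and simplifying gives the requirement
\[
p_i \;\le\; \frac{d\,(m-2k)}{3m-4k}.
\]
This threshold is \emph{not} $d/2$: it is at most $d/3$ for all $k\le m/2$, and it degenerates to $0$ when $k=m/2$. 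Hence your proposed split into ``large'' ($p_i>d/2$) and ``small'' ($p_i\le d/2$) does not put the small clients into the regime where the volume inequality actually bites; for $k$ close to $m/2$ essentially every nonzero client falls into the gap. Reserving disjoint days for the large clients first does not help either, because it does not change the two sides of the inequality above.

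The paper closes exactly this gap by a sharper counting that your argument is missing. Using First-Fit-Decreasing (so the schedule has a block structure), it shows that when the algorithm fails at client~$i^*$: (i) on each of the first $\lceil 2m/3\rceil$ days the load \emph{plus $p_{i^*}$} already exceeds $d$; and (ii) the last $\lfloor m/3\rfloor$ days together with the $k'$ unscheduled copies of~$i^*$ contribute at least $\lceil 2m/3\rceil$ jobs, each of size $\ge p_{i^*}$. Pairing one such job with each of the first $\lceil 2m/3\rceil$ days yields $k'\sum_{p_{i'}\ge p_{i^*}} p_{i'} > \lceil 2m/3\rceil\, d$, and then scaling $k'$ up to $k$ gives $k\sum_i p_i > m d$, the desired contradiction. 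The key extra idea---absent from your sketch---is to use the surplus jobs from the last third of the days to push each of the first $2m/3$ days strictly above~$d$, rather than merely above $d-p_i$.
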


\begin{proof}
 We apply an algorithm similar to the so-called ``First-Fit-Decreasing'' algorithm for \textsc{Bin Packing}~\cite{JohnsonDUGG74}.
  Set~$k':=2\lfloor \frac{k}{3}\rfloor$. The algorithm works in the following steps.
\begin{enumerate}
\item Order the clients decreasingly by the processing time of their jobs.
\item Iterate through the clients in the computed order.

For each client we schedule $k'$ of their jobs on the first~$k'$ days that have enough space, i.e., after the jobs are scheduled the sum of processing times of the scheduled jobs for each day is at most $d$.

\textsl{Note that so far (i.e., without Step~3), the jobs of each client are scheduled in a block of~$k'$ consecutive days that starts a some day $j$ with $j\bmod k'=0$.}
\item If there is a client $i$ who cannot have $k'$ of its jobs scheduled that way, do the following:

\textsl{Note that when this happens for the first time, it means that all blocks of $k'$ consecutive days that starts a some day $j$ with $j\bmod k'=0$ are ``full''. We now make a case distinction on the number $m\bmod k'$ of days that are not part of any of these blocks.}
\begin{itemize}
\item If Step 3 is invoked for the first time, then let $i'$ be the client with smallest processing time scheduled on day $\lfloor \frac{2m}{3}\rfloor + 1$.
Let $j$ be the first day that has a job of client~$i'$ scheduled. Schedule jobs of clients $i$ and $i'$ to days $\{m-(m\bmod k')+1, \ldots, m-(m\bmod k')+\frac{k'}{2}\}$ and replace the jobs of client $i'$ that are scheduled on days $\{j, \ldots, j+\frac{k'}{2} - 1\}$ with jobs of client $i$.

If Step 3 is invoked for the second time, then output FAIL.
\item If $m\bmod k'< \frac{k'}{2}$, then output FAIL.
\end{itemize}
\item If all clients are processed, output the schedules.
\end{enumerate}
  
  We first show that if the presented algorithm outputs a set of schedules, the set is \equitable{k'}. If $m\bmod k'< \frac{k'}{2}$, then this is obvious. If $m\bmod k'\ge \frac{k'}{2}$, then we have to check that Step~3 of the algorithm does not produce infeasible schedules. Observe that in Step~3, we have that $p_{i'}\ge p_i$ since the clients are ordered by the processing time of their jobs and client $i'$ is processed before client $i$. This means that replacing a jobs of client $i'$ by a job of client $i$ on some day cannot violate the deadline unless it was already violated before swapping the jobs.
  Observe that if $\mathcal{I}$ is a YES-instance, then there can be at most $\lfloor \frac{m}{k} \rfloor$ jobs with processing time more than~$\frac{d}{2}$.
  Thus there are at most $\lfloor\frac{2m}{3} \rfloor$ days on which our algorithm schedules a job with processing time more than $\frac{d}{2}$.
  Since the algorithm processes the jobs in decreasing order, all jobs with length more than $\frac{d}{2}$ are scheduled only on the first $\lfloor \frac{2m}{3} \rfloor$ days.
  It follows that $p_{i'} \le \frac{d}{2}$ since it is scheduled on day~$\lfloor \frac{2m}{3} \rfloor + 1$.
  It follows that $p_i \le p_{i'} \le \frac{d}{2}$, and thus, the deadline is not violated on days $m- (m \mod k')+1, \dots, m - (m \mod k') + \frac{k'}{2}$.
  This implies that Step~3 always produces \equitable{k'} sets of schedules.

In the remainder of the proof we show that if the presented algorithm outputs FAIL, then~$\mathcal{I}$ is a NO-instance.
On an intuitive level, the main idea is to show that the first $\lceil\frac{2m}{3} \rceil$ days are ``full'' and the remaining $\lfloor\frac{m}{3} \rfloor$ days have at least $\lceil\frac{2m}{3} \rceil-k'$ jobs (in total) scheduled. This then allows us to show that the total processing time if $k$ jobs of each client were scheduled exceeds $m\cdot d$, which implies that~$\mathcal{I}$ is a NO-instance.
  
  Since all jobs with length more than $\frac{d}{2}$ are scheduled only on the first $\lfloor \frac{2m}{3} \rfloor$ days, it follows that if the algorithm outputs FAIL, then the last $\lceil\frac{m}{3} \rceil$ days have either at least two jobs scheduled or none.

  Assume that our algorithm outputs FAIL and let client $i^*$ be the client that was processed when the algorithm output FAIL.
  Note that there are strictly less than $\frac{k'}{2}$ days with no jobs scheduled, independent on whether $m \mod k' \le \frac{k'}{2}$.
Thus, among the last $\lfloor\frac{m}{3} \rfloor$ days, (strictly) less than $\frac{k'}{2}$ days have no jobs scheduled and all others have at least two jobs scheduled. Together with $k'$ jobs of client $i^*$ which are not scheduled at all, we have at least $2(\lfloor\frac{m}{3}\rfloor-\frac{k'}{2}+1)+k'\ge  2\lceil\frac{m}{3} \rceil$ jobs, all of which have a processing time of at least~$p_{i^*}$. Let the set of these jobs be called $J^*$. Since the jobs of client $i^*$ could not be scheduled in the first $\lceil\frac{2m}{3} \rceil$ days, we know that the total processing time of all jobs from one of the first $\lceil\frac{2m}{3} \rceil$ days plus $p_{i^*}$ or the processing time of any job in $J^*$ is larger than the deadline~$d$. Intuitively, this allows us to ``distribute'' the processing times of the jobs in $J^*$ to the first $\lceil\frac{2m}{3} \rceil$ days (note that $|J^*|\ge\lceil\frac{2m}{3} \rceil$) and derive the following estimate:
$k' \sum_{i : p_i\ge p_{i^*}} p_i > \lceil\frac{2m}{3} \rceil \cdot d$. Substituting $k'$ with $k$ and summing over all clients, we get $k \sum_{i \in \{1,\ldots,n\}} p_i > m \cdot d$, which is a contradiction to the assumption that $\mathcal{I}$ is a YES-instance.

Since First-Fit-Decreasing can be implemented in $O(n^*\log n^*)$~\cite{Johnson74}, where $n^*$ is the number of elements, Steps~1 and 2 can be performed in time $\bigO (n \log n + n \cdot k)$ by running First-Fit-Decreasing on the instance with one element of size $p_i$ for each client~$i$ and $\lfloor \frac{m}{k'}\rfloor$ bins of size~$d$, and then cloning the solution $k'$ times.
  Step~3 clearly runs in $\bigO(k)$ while Step~4 runs in constant time.
  The running time of $\bigO (n (k + \log n))$ follows.
\end{proof}

We now turn to the case $k > \frac{m}{2}$.

\begin{lemma}\label{lem:apx-large-k}
Given a YES-instance $\mathcal{I} = (\{p_1, \ldots, p_n\}, m,d, k)$ with $k > \frac{m}{2}$ of \ESSDabbrvstar, one can compute a solution to the instance $\mathcal{I} ' := (\{p_1, \ldots, p_n\}, m,d, k')$ with $k':=\lfloor \frac{2k}{3}\rfloor$ in $O (n \cdot (k + \log n)) $ time.
\end{lemma}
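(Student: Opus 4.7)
The plan is to handle the regime $k > m/2$ by passing to the complementary ``drop'' formulation and applying a covering variant of the first-fit-decreasing algorithm used in Lemma~\ref{lem:apx-small-k}. Writing $T = \sum_i p_i$ and $\Delta = T - d$, a set of schedules is $k'$-equitable exactly when each client's job is dropped on at most $m - k'$ days and, on every day, the total processing time of the dropped jobs is at least $\Delta$. The YES-assumption at parameter $k$ gives $kT \le md$, and combined with $k > m/2$ this yields $T < 2d$ and hence $0 \le \Delta < d$. Moreover $m - k < m/2$, so the per-client drop-budget is small, placing the covering instance in the small-budget regime amenable to the same FFD argument used in Lemma~\ref{lem:apx-small-k}.

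The algorithm I propose mirrors that of Lemma~\ref{lem:apx-small-k} but now works on the drop side. First, sort clients in decreasing order of $p_i$. Then, iterating through them in that order, assign each client's $m - k' = m - \lfloor 2k/3 \rfloor$ drop days to a block of $m - k'$ consecutive days that starts at an index divisible by $m - k'$ and whose scheduled load (after the drop) would fall below $d$. If a client cannot be accommodated in any such block, invoke a swap: exchange one of its drops with an already-placed drop of a strictly lighter client on a still-uncovered day, exploiting the fact that the processing-time order guarantees the swap only increases drop-load on that day. If a second swap would be required, output FAIL.

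For correctness, the successful-termination case is immediate: swaps preserve the covering inequality on every day because incoming drops are at least as heavy as outgoing ones, and the block structure guarantees that each client drops on exactly $m - k'$ days, so the schedule is $k'$-equitable. The FAIL case is handled by a counting argument paralleling Lemma~\ref{lem:apx-small-k}: if the algorithm fails, then more than a $1/3$-fraction of the days still have scheduled load exceeding $d$ and the pending heavy client still owes $m - k'$ drops; accumulating these deficits yields $kT > md$, contradicting the YES-assumption.

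The main obstacle will be the FAIL-analysis and its interaction with $m \bmod (m - k')$. Lemma~\ref{lem:apx-small-k} splits on $m \bmod k'$ to obtain the sharp $2/3$-threshold; here the symmetric split is on $m \bmod (m - k')$, and I expect the dual of the two sub-cases $m \bmod k' < k'/2$ versus $m \bmod k' \ge k'/2$ to appear. Getting the threshold $k' = \lfloor 2k/3 \rfloor$ (rather than the coarser $2\lfloor(m-k)/3\rfloor$ that a black-box reduction to Lemma~\ref{lem:apx-small-k} would give) will require opening up the FFD analysis in this covering form. The running time bound $O(n(k + \log n))$ then follows exactly as in Lemma~\ref{lem:apx-small-k}: First-Fit-Decreasing on the associated $\lfloor m/(m-k')\rfloor$-bin packing instance runs in $O(n\log n)$, the block-expansion costs $O(n(m - k')) = O(nk)$, and the swap step is $O(k)$.
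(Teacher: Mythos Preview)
Your reformulation via drops is correct, but the transfer of the Lemma~\ref{lem:apx-small-k} machinery to the covering side breaks down concretely. The key mismatch is that your algorithm works with per-client drop budget $m-k'$, not $m-k$: while indeed $m-k<m/2$, the quantity $m-k'=m-\lfloor 2k/3\rfloor$ is close to $2m/3$ when $k$ is near $m/2$, so your block decomposition has only a \emph{single} full block plus roughly $k'$ leftover days. In this regime neither your placement rule nor the swap can cover the leftovers. Take $m=12$, $k=8$, $d=10$, and $n=15$ unit-size clients. This is a YES-instance (drop clients $1$--$5$ on days $1$--$4$, clients $6$--$10$ on days $5$--$8$, clients $11$--$15$ on days $9$--$12$), and here $k'=5$, $m-k'=7$, $\Delta=5$. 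Your algorithm has one block (days $1$--$7$) and five leftover days. Under your stated criterion (``scheduled load after the drop falls below $d$'') no single unit drop achieves this, so already the first client cannot be accommodated and there is nobody to swap with; under the more natural ``first uncovered block'' rule, clients $1$--$5$ cover the block and client~$6$ again has no uncovered block, while the leftover days $8$--$12$ carry no drops at all and hence nothing to swap against. Either way the procedure FAILs on a YES-instance. Note also that in decreasing-order FFD every already-placed client is at least as heavy as the current one, so a ``strictly lighter already-placed client'' never exists, and swapping the current (lighter) client in would \emph{decrease} drop-load---the wrong direction for covering.

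The paper takes a completely different route in the regime $k>m/2$: it classifies clients as large ($p_i>d/3$) or small, uses $k>m/2$ together with pigeonhole to show that a YES-instance has at most three large clients and no pair summing above~$d$, and then dispatches a handful of explicit cases (greedy placement of the large clients first; a direct construction when the small clients together have total processing time at most $d/3$ or at most $2d/3$; a recursive call on the small clients with reduced deadline $d-p_\ell$ when there is a single large client). That structural case analysis is what stands in for the block/swap machinery you are trying to port.
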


\begin{proof}
We classify the clients into two groups based on the processing time of their jobs:
A client $i$ is \emph{large} if $p_i > \frac{d}{3}$ and \emph{small} otherwise (i.e., if $p_i \le \frac{d}{3}$).
Set $k':=\lfloor \frac{2k}{3}\rfloor$.
We start with some basic obervations:
  \begin{enumerate}
    \item \label{item:two-large-jobs}
    There are no two clients $i_1$ and $i_2$ with $p_{i_1} + p_{i_2} > d$.
    Since $k > \frac{m}{2}$, every solution to $\mathcal{I}$ must schedule jobs of clients~$i_1$ and $i_2$ at least once to the same day by the pidgeon-hole principle, which is impossible if $p_{i_1} + p_{i_2} > d$.
    \item There are at most three large clients. Assume for contradiction that there are four large clients. Then, since $k > \frac{m}{2}$, by the pidgeon-hole principle there is one day that has three jobs from three of the four large clients scheduled, which is impossible since the total processing time on that day would exceed $d$.
    
    \item \label{item:overfull}
    The total processing time of all jobs that need to be scheduled cannot exceed $m\cdot d$, i.e., $k\sum_{i\in \{1,\ldots,n\}} p_i \le m\cdot d$.
    
    Note that this implies that if a \equitable{k'} set of schedules schedules on each day jobs with total processing time larger than $\frac{2d}{3}$, then $\mathcal{I}$ is a NO-instance, since then $k \sum_{i\in \{1,\ldots,n\}} p_i \ge \frac{3}{2}k' \sum_{i\in \{1,\ldots,n\}} p_i > \frac{3}{2} m \frac{2d}{3} = m\cdot d$.

  \end{enumerate}
From now on we assume that the first two observations hold, otherwise $\mathcal{I}$ is a NO-instance. 

Intuitively, we will mostly try to use the third observation to show that our
algorithm is correct: We greedily fill up all days with jobs until no job of a
small client fits in any day. If this happens and we do not have a
\equitable{k'} set of schedules, then by the third observation we can deduce
that we were facing a NO-instance. However, in order to do this, we first have
to deal with some special cases explicitely (which are handled in Steps 2 and
3 of the algorithm in the next paragraph).
If the total processing time of the jobs of all small clients is very small (i.e., at most $\frac{d}{3}$) we can
construct a \equitable{k'} set of schedules directly. We also need to treat some
cases where the total processing time of the jobs of all small clients is at
most $\frac{2d}{3}$ separately, hence then we can have the case that we cannot
schedule a job of any small client on a certain day and still the total
processing time on that day does not exceed $\frac{2d}{3}$, which prevents us
from applying the third obervation.
Formally, we sort once in all clients by the processing times of their jobs, and then we compute a set of schedules in the following way. 
\begin{enumerate}
\item If the sum of
processing times of all small clients is at most $\frac{d}{3}$ and there are at
least two large clients, then we do the following.

We schedule the jobs of
 the up to three large clients one after another in the following way.
 We pick the $k$ days having the most free processing time and schedule a job of client whose job we currently schedule on these days.
 If these schedules exceed the deadline on one day, then we output FAIL.
  
  Now we pick $\lceil\frac{k}{3} \rceil$ days where the first large client has
  a job scheduled, we remove that job and replace it with jobs of all small clients.
  Next, we pick $\lceil \frac{k}{3} \rceil$ \emph{different} days where the second large
  client has a job scheduled, we remove that job and replace it with jobs of all small clients.
\item If the sum of
processing times of all small clients is at most $\frac{2d}{3}$ and there are at most two large jobs, then we do the
following.
\begin{itemize}
  \item If there are no large clients, we schedule all jobs of all small clients
  on the first $k'$ days.
  \item If there is only one large client, then we
schedule the job of the large client on the first $k'$ days and on the $m-k'$
remaining days we schedule jobs of all small clients.

If $m < 2k'$, then we recursively find a \equitable{\lfloor \frac{2}{3} (k + k' - m)\rfloor} schedule for the small clients
on the first $k$ days where the deadline is set to $d-p_\ell$, where $p_\ell$ is the
processing time of the job of the large client.
\item If there are two large clients and $k'<\frac{m}{2}$, then we schedule
  jobs of the two large clients on the first $k'$ days and jobs of all small
  clients on the last $k'$ days.
\end{itemize}
  
  \item We schedule the jobs of
 the up to three large clients one after another in the following way.
 We pick the $k'$ days having the most free processing time and schedule a job of client whose job we currently schedule on these days.
 If these schedules exceed the deadline on one day, then we output FAIL.
  \item We schedule the jobs of the small clients of after another in the following fashion. We
  fix an order of the small clients and create a list $L$ repeating this order $k'$ times.
  We process the days from the first one to the last one as follows.
  Until the list $L$ gets empty, we schedule the job of the first client~$i$ in $L$ and delete (this appearance of) $i$ from $L$,
  unless the job of $i$ is already scheduled on this day, or the processing time of this job together with the processing time of all jobs already scheduled on this day exceeds the deadline.
  If the list $L$ is non-empty after we processed the last day, we return FAIL.
\end{enumerate}

Assuming the algorithm does not recurse in Step~2, it is easy to check that if the algorithm does not output FAIL, then we found
a \equitable{k'} set of schedules.

If the algorithm recurses in Step~2, then the large job is scheduled $k'$ times and every small job is scheduled $m - k' + \lfloor \frac{2}{3} (k + k' - m)\rfloor$ times.
Note that using $k' \le \min \{m, \frac{2k}{3}\}$, we get $m - k' + \frac{2}{3} ( k + k' -m) = \frac{m}{3} + \frac{2k}{3} - \frac{k'}{3} \ge \frac{k'}{3}+ k' - \frac{k'}{3} = k'$.
Thus, due to the integrality of $m$ and~$k'$, every small job is scheduled at least $k'$ times.

In the remainder of the proof we show that if the algorithm outputs
FAIL, then $\mathcal{I}$ is a NO-instance.

Assume that the algorithm outputs FAIL in Step~1. Since we assume that the
first basic observation holds this can only happen if there are three large
clients and the algorithm schedules one job of each large client to one day and
all other days have two jobs of large clients scheduled. 
This can only happen if $3k>2m$, however then, by the pidgeon-hole principle,
any feasible solution would have to schedule jobs of each of the three large
clients on the same day. This is a contradiction to the assumption that 
$\mathcal{I}$ is a YES-instance.

By the same argument, we have that if the algorithm outputs FAIL in Step~3, then
$\mathcal{I}$ is a NO-instance.

Now assume that the algorithm outputs FAIL in Step~4. 
Since Step~4 was applied, the sum of processing of all small clients is larger than $\frac{2d}{3}$, or the processing time of all small clients exceeds $\frac{d}{3}$ and on each day, at least one large job is scheduled.
This implies that for
each day the sum of processing times of jobs scheduled at that day is larger
than $\frac{2d}{3}$, since otherwise the algorithm would have scheduled the job from the next client in $L$.
However then, by the third basic observation, we know
that $\mathcal{I}$ is a NO-instance.

Finally, in Step~3, the algorithm may output FAIL only in the recursive call.
Fix a solution to $\mathcal{I}$ which schedules the large jobs on the first $k$ days.
Thus, on the first $k'$ days, this solution has to schedule each of the small jobs at least $m- k'$ times, and has only $d- p_\ell$ time on each of these days, where $p_\ell$ is the processing time of the large job.
It follows that the created instance admits a \equitable{k - (m-k')}-schedule.
Thus, by induction, our algorithm finds a \equitable{k + k' -m}-schedule on this instance and does not output FAIL.

Except for the recursion, all of Steps~1-4 can clearly be performed in $\bigO (n\cdot k)$.
Since we sort the clients by the processing time of their job, calling the recursion in Step~2 can be done in constant time, as only the large jobs (which are first up to three jobs) need to be removed from the instance and $k'$ and $d$ need to be adjusted.
Thus, a total running time of $\bigO (n (k + \log n))$ follows.
\end{proof}

Now we can combine Lemma~\ref{lem:apx-small-k} and Lemma~\ref{lem:apx-large-k} to get the following
result.

\begin{theorem}
Given a YES-instance $\mathcal{I} = (\{p_1, \ldots, p_n\}, m,d, k)$ of \ESSDabbrvstar, one can compute a solution to an instance $\mathcal{I} ' := (\{p_1, \ldots, p_n\}, m,d, k')$ with $\lfloor 2\frac{k}{3}\rfloor \ge k' \ge 2\lfloor \frac{k}{3}\rfloor$ in $O (n \cdot (k + \log n)) $ time.
\end{theorem}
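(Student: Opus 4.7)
The plan is almost entirely bookkeeping: the two preceding lemmas already do the algorithmic work, and the theorem is just their union assembled into a single statement. I would begin by comparing $k$ against $\frac{m}{2}$, which can be done in constant time once $k$ and $m$ are read. If $k \le \frac{m}{2}$, I would invoke Lemma~\ref{lem:apx-small-k} to obtain a schedule for the parameter $k' := 2\lfloor k/3 \rfloor$. If instead $k > \frac{m}{2}$, I would invoke Lemma~\ref{lem:apx-large-k} to obtain a schedule for the parameter $k' := \lfloor 2k/3 \rfloor$.

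Next I would verify that in either branch the produced $k'$ lies in the stated interval $2\lfloor k/3\rfloor \le k' \le \lfloor 2k/3\rfloor$. One endpoint is attained trivially in each case, so only the inequality $2\lfloor k/3\rfloor \le \lfloor 2k/3\rfloor$ needs justification: since $2\lfloor k/3\rfloor$ is an integer satisfying $2\lfloor k/3\rfloor \le 2k/3$, the floor $\lfloor 2k/3\rfloor$ is at least as large. Hence both branches deliver a $k'$ inside the required window.

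For the running time, both lemmas guarantee $O(n(k+\log n))$, and the case distinction adds only constant overhead, so the overall bound is $O(n(k+\log n))$ as claimed.

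The only place where I would expect to pause is making sure the case split is exhaustive and that the lemma hypotheses are actually met in each branch (in particular the strict versus non-strict inequality at $k = m/2$); this is handled by the way the two lemmas are stated with complementary conditions $k \le \frac{m}{2}$ and $k > \frac{m}{2}$. Beyond that, no further argument is needed.
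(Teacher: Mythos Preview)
Your proposal is correct and matches the paper's approach exactly: the paper simply states that the theorem follows by combining Lemma~\ref{lem:apx-small-k} and Lemma~\ref{lem:apx-large-k}, without even spelling out the case split or the inequality $2\lfloor k/3\rfloor \le \lfloor 2k/3\rfloor$ that you verify. Your write-up is, if anything, more detailed than what the paper provides.
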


We leave as an open question whether a similar result can be obtained for \ESSDabbrv.

\section{Precedence Constraints}
In this section, we investigate the computational complexity of \ESPC.

\subsection{Hardness Results}

The hardness result from Theorem~\ref{thm:sameduedatehardness} for \ESSDabbrvstar can
easily be adapted to \ESPCabbrvstar by modeling processing times by paths of
appropriate length in the precedence DAG. Hence, we get the following result.
\begin{corollary}
\ESPCabbrvstar is NP-hard and W[1]-hard when parameterized by the number of days $m$ even if $k=1$ and the precedence DAG consists of disjoint paths.
\end{corollary}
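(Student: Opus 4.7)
The plan is to adapt the parameterized reduction from \textsc{Unary Bin Packing} used in Theorem~\ref{thm:sameduedatehardness}, simulating each non-unit processing time by a directed path of unit-length jobs in the precedence DAG. Starting from an instance of \textsc{Unary Bin Packing} with items $\{1, \dots, n\}$ of sizes $s_1, \dots, s_n$ and $b$ bins of size $B$, I would build an \ESPCabbrvstar instance with $m = b$ days, common deadline $d = B$, equity parameter $k = 1$, and, for each item $i$, a disjoint directed path $c_{i,1} \to c_{i,2} \to \cdots \to c_{i,s_i}$ in the (day-invariant) precedence DAG. Because all sizes are encoded in unary, the total number of clients $\sum_i s_i$ is polynomial in the input size, so the reduction runs in polynomial time, and since the parameter $m = b$ is preserved this is a valid parameterized reduction.

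The key observation driving correctness is that on every day $j$ the precedence path forces $\sigma_j(c_{i,1}) < \sigma_j(c_{i,2}) < \cdots < \sigma_j(c_{i,s_i})$, so $c_{i,s_i}$ meets deadline $d$ on day $j$ if and only if the whole chain does, and in that case the chain occupies exactly $s_i$ distinct positions in $\{1, \dots, d\}$. Thus the chain of client $i$ behaves, for deadline purposes, exactly like a single job of processing time $s_i$ in the \ESSDabbrvstar reduction.

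For the forward direction, I would take a valid packing, and for every item $i$ placed in bin $j$ schedule the chain for $i$ within the first $d = B$ positions of day $j$ (in an order consistent with all chain precedences, which is always possible since the chains are disjoint and the per-bin total length is at most $B$); the remaining clients on day $j$ are placed arbitrarily after position $d$ respecting precedences. Every client on an assigned chain then satisfies its deadline on day $j$, giving a $1$-equitable schedule. For the backward direction, given a $1$-equitable schedule, for each item $i$ I would pick a day $j_i$ on which $c_{i,s_i}$ is satisfied (which exists because $k = 1$ forces the terminal client of the chain to be satisfied somewhere) and assign item $i$ to bin $j_i$. On each day $j$, the chains with $j_i = j$ occupy pairwise disjoint positions all bounded by $d$, so $\sum_{i : j_i = j} s_i \le d = B$, which yields a feasible bin packing.

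The only subtlety (rather than a genuine obstacle) is checking that requiring merely $c_{i,s_i}$ to be satisfied really forces the whole chain to be scheduled before the deadline on that day, and that conversely partial scheduling of a chain across different days cannot help; both follow at once from the fact that precedence constraints must hold in every daily permutation $\sigma_j$ simultaneously, so the chain is always contiguous below any position reached by its last element.
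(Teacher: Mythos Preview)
Your proposal is correct and follows essentially the same route as the paper: reduce from \textsc{Unary Bin Packing} (as in Theorem~\ref{thm:sameduedatehardness}) and replace each item of size $s_i$ by a disjoint directed path of $s_i$ unit-time clients in the common precedence DAG, keeping $m=b$, $d=B$, and $k=1$. The only cosmetic slip is the word ``contiguous'' in your last paragraph---the chain elements need not occupy consecutive positions, only positions all at most $\sigma_j(c_{i,s_i})\le d$---but your counting argument does not rely on contiguity, so the proof goes through unchanged.
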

\begin{proof}[Proof sketch]
We use the same idea as in the proof of Theorem~\ref{thm:sameduedatehardness}. For each client~$i$ the reduction in the proof of Theorem~\ref{thm:sameduedatehardness} created one job of each day $j$ with processing time $p_i$ that can be encoded in unary. This allows us to introduce $p_i-1$ additional dummy clients for each client~$i$. In the precedence DAG for each day $j$ we add a directed path of length $p_i$ where the job of client~$i$ is the last one in the path and the other jobs in the path are the ones of the dummy clients for client~$i$ in some arbitrary order. This means that of the job of client~$i$ to be scheduled on a day $j$, the jobs of the dummy clients for client~$i$ also have to be scheduled on that day which simulated the processing time~$p_i$.
\end{proof}

For the setting where we do not have that all days look the same, we get
NP-hardness even for two days.
\begin{theorem}\label{thm:constdays}
\ESPCabbrv is NP-hard even if ${k=1}$ and ${m=2}$.
\end{theorem}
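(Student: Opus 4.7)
The plan is to establish NP-hardness by a polynomial-time reduction from a classical NP-hard source problem. Observe first that when $m=2$ and $k=1$, an \ESPCabbrv-instance reduces to asking whether there exist ideals $I_1$ of $G_1$ and $I_2$ of $G_2$ with $|I_j|\le d_j$ and $I_1\cup I_2=V$, since with unit processing times and a single deadline per day the set of clients satisfied on day $j$ is precisely a size-at-most-$d_j$ ideal of $G_j$ (i.e., a prefix of a topological order). If either of $G_1,G_2$ is trivial the problem is solvable in polynomial time, so any hardness must exploit a non-trivial interaction between the two DAGs.

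My first attempt would be a reduction from \textsc{3-SAT}. Given a formula $\phi$ with variables $x_1,\ldots,x_n$ and clauses $C_1,\ldots,C_m$, I would introduce one \emph{variable client} $v_i$ per variable (with intended semantics $v_i\in I_1 \Leftrightarrow x_i=\mathrm{TRUE}$), plus three \emph{literal clients} $L_{j,1},L_{j,2},L_{j,3}$ and a small auxiliary clause gadget per clause $C_j$. For each positive occurrence $\ell_{j,a}=x_i$ I would add the edge $v_i\to L_{j,a}$ in $G_1$, forcing $L_{j,a}\in I_1\Rightarrow v_i\in I_1$; symmetrically, for each negative occurrence $\ell_{j,a}=\bar x_i$ I would add $v_i\to L_{j,a}$ in $G_2$. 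Choosing $d_1+d_2$ to equal the total number of clients makes the solution a partition of the client set into two ideals, and in particular commits each $v_i$ to exactly one day (and hence exactly one truth value).

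The main obstacle is that ideals express only \emph{conjunctive} implications, whereas each clause encodes a \emph{disjunction}. My plan is to design the clause gadget so that, under the tight deadline budget, placing all three $L_{j,a}$ on the ``wrong'' day (opposite to the literal's polarity) consumes strictly more slots than are available; consequently at least one of them must land on the ``correct'' day, which through the $G_j$-edges forces the corresponding variable-client into the satisfying ideal. Making this counting argument watertight by tuning the number of auxiliary clients per clause and the precise values of $d_1$ and $d_2$ is the delicate technical step; if a \textsc{3-SAT}-based construction turns out to be awkward, I would fall back on a graph source problem such as \textsc{Vertex Cover} or \textsc{Balanced Complete Bipartite Subgraph}, whose structure aligns more naturally with covering by two ideals. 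Once the gadget works, the forward direction follows by placing clients according to the chosen truth values and satisfying-literal witnesses, and the backward direction follows by reading off the truth assignment from the $v_i$ and invoking the deadline-counting argument within each clause gadget to conclude that every clause is satisfied.
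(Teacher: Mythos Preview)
Your observation that the $m=2$, $k=1$ problem is exactly ``cover the client set by an ideal of $G_1$ of size $\le d_1$ and an ideal of $G_2$ of size $\le d_2$'' is correct and is the right starting point. But the proposal stops precisely at the step that constitutes the proof: you identify that encoding a disjunctive clause using conjunctive ideal constraints plus a global budget is ``the delicate technical step,'' and then leave it undone. That step \emph{is} the reduction. Without a concrete clause gadget and a verification that the backward direction goes through, what remains is an outline of intentions, not an argument. The difficulty you flag is genuine and not just bookkeeping: a precedence arc yields an implication, a deadline yields a single global cardinality bound, and it is far from clear how to combine these into a per-clause ``at least one of three'' constraint without the clauses interfering through the shared budget. (Even your tight-partition idea $d_1+d_2=|V|$ is fragile: the number of variable clients landing in $I_1$ would equal the number of TRUE variables in the witnessing assignment, which is not determined by the formula.)

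The paper sidesteps all of this by reducing from \textsc{Clique}, whose defining condition---\emph{every} pair among $h$ chosen vertices is an edge---is conjunctive and therefore matches the ideal structure directly. One vertex client per vertex and one edge client per edge are created. On day~1 the DAG is a chain through all vertex clients with every edge client depending on the last vertex, and $d_1=|U|+|F|-\binom{h}{2}$; any day-1 ideal of this size is thus forced to contain all vertices and all but $\binom{h}{2}$ edges. On day~2 the DAG is the incidence relation (each edge client has its two endpoint vertex clients as predecessors) with $d_2=h+\binom{h}{2}$; the $\binom{h}{2}$ edges missed on day~1 must then appear on day~2, dragging in their endpoints, and the budget forces those endpoints to number at most $h$, i.e., to form a clique. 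Your instinct to fall back on a graph problem was the right one; the point is that \textsc{Clique} supplies exactly the disjunction-free structure that makes the two-ideal formulation go through without any gadget engineering.
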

\begin{proof}
We reduce from \textsc{Clique}, where given a graph $H=(U,F)$ and an integer
$h$, we are asked to decide whether $H$ contains a complete subgraph with $h$
vertices. This problem is known to be NP-complete~\cite{Kar72}. Given a graph
$H=(U,F)$ and an integer $h$ we construct an instance of \ESPCabbrv as follows.
Assume that the vertices in $U$ are ordered in some fixed but arbitrary way,
that is, $U=\{v_1, v_2, \ldots, v_{|U|}\}$.
\begin{itemize}
\item For each vertex $v\in U$ we create one ``vertex client''~$i_v$ and for each
edge $e\in F$ we create one ``edge client''~$j_e$.
\item For day one we create the precedence DAG $G_1$ where for all
$\ell\in[|U|-1]$ we have that $(i_{v_\ell},i_{v_{\ell+1}})\in E_1$ and for all
$e\in F$ we have that $(i_{v_{|U|}},j_e)\in E_1$. That is, the precedence DAG is
a directed path containing all jobs of vertex clients and all
jobs of edge clients are out-neighbors of the job of the last vertex client in
the path.
Furthermore, we set the deadline~$d_1$ for day one to $|U|+|F|-\binom{h}{2}$.
\item For day two we create the precedence DAG $G_2$ where for all
$v\in U$ we have that $(i_v,j_e)\in E_2$ if and only if $v\in
e$. That is, for each edge $e$ of~$H$ the precedence DAG contains two arcs from
the jobs of the vertex clients corresponding to the endpoints of~$e$ to the
job of the edge client corresponding to~$e$.
Furthermore, we set the deadline $d_2$ for day two to~$h+\binom{h}{2}$.
\item We set $k=1$.
\end{itemize}
Clearly, the reduction can be computed in polynomial time. Intuitively, day one
is a ``selection gadget''.
The deadline and the precedence DAG are chosen in a way such that all jobs except the ones of $\binom{h}{2}$
edge clients can be scheduled. The second day is a ``validation gadget''
that ensures that the edges corresponding to the $\binom{h}{2}$
edge clients that have no job scheduled on day one form a clique in $H$.

$(\Rightarrow)$: Assume that $H$ contains a clique $X\subseteq U$ of size~$h$.
On day one, we schedule all jobs of vertex clients~$i_v$ in the order
prescribed by the precedence DAG $G_1$. Then for all edges~$e\in F$ such that not both endpoints
of~$e$ are in~$X$ we schedule the job of the corresponding edge client~$j_e$. Note that $G_1$ allows us to do this, since all jobs of vertex
clients are already scheduled. Furthermore, the deadline of day one allows us
to schedule jobs of all but $\binom{h}{2}$ clients. Since the vertices of $X$
form a clique, there are exactly $\binom{h}{2}$ edges that have both their
endpoints in $X$. Hence all jobs that are scheduled on day one finish before the
deadline. 

On day two, we first schedule all jobs of vertex clients $i_v$
with $v\in X$. Then we schedule the jobs of edge clients $j_e$ with
$e\subseteq X$, that is, both endpoints of~$e$ are part of the clique $X$. Note
that $G_2$ allows us to schedule the jobs of these edge clients since we
already scheduled the jobs of the vertex clients corresponding to the
endpoints the edges corresponding to the jobs of these edge clients. Note that
those edge clients are exactly the ones that do not have their jobs scheduled on day one. Furthermore,
the total number of jobs scheduled on day two is $h+\binom{h}{2}$, hence they
all finish before the deadline. It follows that we have found a set
of 1-equitable schedules.

$(\Leftarrow)$: Assume that there is a set of 1-equitable schedules. Note that
on day one, the precedence DAG required that the jobs of all vertex clients are
scheduled first, and then an arbitrary set of $|F|-\binom{h}{2}$ jobs of edge
clients can be scheduled. Let $F^\star\subseteq F$ be the set of edges such
that the corresponding edge clients do not have a job scheduled on day one. Note
that $|F^\star|\ge \binom{h}{2}$ and that all edge clients corresponding to
edges in $F^\star$ have their job scheduled on day two, otherwise the
set of schedules would not be 1-equitable. The precedence DAG $G_2$ for day two
requires that if a job of an edge client $j_e$ is scheduled, the jobs of the vertex
clients corresponding to the endpoints of $e$ need to be scheduled before. The
deadline of day two allows for at most $h$ additional jobs to be scheduled,
hence there need to be $h$ jobs that can be scheduled on day two such that
 all precedence constraints for the jobs of edge clients
 corresponding to edges in~$F^\star$ are fulfilled. Note that by construction of
 $G_2$ we can assume that all additionally scheduled jobs belong to vertex
 clients. Let $U^\star\subseteq U$ be the set of vertices corresponding to
 vertex clients that have a job scheduled on day two. We already argued that
 $|U^\star|\le h$. However, we also have that $|U^\star|\ge h$ since otherwise,
 by the pidgeon hole principle, there is at least one edge client
 corresponding to an edge in $F^\star$ that does not have the precedence
 constraints of its job fulfilled. It follows that $|U^\star|= h$ which implies
 that the vertices in $U^\star$ form a clique in $H$.
\end{proof}

In the following, we present some hardness results that show that even further restrictions on the precedence DAG presumably cannot yield polynomial-time solvability.

\begin{theorem}\label{thm:precedencefewpaths}
\ESPCabbrv is NP-hard even if $k=1$, $d=3$, and the precedence DAG of each day consists of at most two disjoint paths.
\end{theorem}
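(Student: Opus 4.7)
The plan is to give a polynomial-time many-one reduction from a suitable NP-hard problem—most naturally a bounded-occurrence variant of 3-SAT, which remains NP-hard—exploiting the fact that with $d=3$ and a precedence DAG of at most two disjoint paths, each day's schedule is fully described by a pair $(a,b)$ with $a,b \ge 0$ and $a+b\le 3$, and the only ``maximal'' such configurations are $(3,0)$, $(2,1)$, $(1,2)$, $(0,3)$. The key is to exploit these four per-day options to encode both the binary choices of a truth assignment and the three-way disjunctions of clauses.

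For each variable $x_i$, I would build a variable gadget consisting of a constant number of dedicated days. The main variable day $V_i$ has its first path containing the positive literal-occurrence clients of $x_i$ and its second path containing the negative ones, together with auxiliary ``commitment'' clients placed so that the only way to cover all forced auxiliaries is to choose either $(3,0)$ on $V_i$ (interpreted as $x_i:=\text{true}$, covering exactly the positive literal-occurrence clients) or $(0,3)$ (interpreted as $x_i:=\text{false}$). The mixed configurations $(2,1)$ and $(1,2)$ are ruled out by having a constant number of companion backup days whose own forced clients can only be covered if the choice on $V_i$ was pure true or pure false, so that every $1$-equitable schedule is equivalent to one with a well-defined truth value per variable.

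For each clause $C_j=\ell_{j,1}\vee\ell_{j,2}\vee\ell_{j,3}$, I would use three clause days $D_j^1,D_j^2,D_j^3$ sharing a single clause client $c_j$ that appears on each of them. On $D_j^r$, the two paths are arranged so that the only configuration covering $c_j$ places the literal-occurrence client of $\ell_{j,r}$ outside the selected prefixes of both paths; hence if $c_j$ is covered on $D_j^r$, the corresponding literal-occurrence client must instead be covered on $V_{i(\ell_{j,r})}$, which forces $\ell_{j,r}$ to be true. Because $c_j$ only has to be covered on one of the three days, the gadget faithfully encodes the disjunction $\ell_{j,1}\vee\ell_{j,2}\vee\ell_{j,3}$.

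Finally I would verify correctness in both directions: a satisfying truth assignment yields a $1$-equitable set of schedules by choosing the corresponding side at each $V_i$ and covering $c_j$ on a day corresponding to a true literal, while conversely any $1$-equitable set of schedules induces a consistent assignment read off from the variable gadgets that must satisfy every clause since each $c_j$ had to be covered somewhere. The main obstacle will be the inherent asymmetry of the four maximal configurations with respect to three literals placed on the two paths of a single clause day—any single-day clause gadget can only ``pin'' a pre-selected literal as the forced-true one—so the technical crux is the three-day clause gadget with a shared $c_j$ that restores the required three-way symmetry; a secondary challenge will be placing just enough auxiliary clients on the variable days to exclude the mixed $(2,1)$ and $(1,2)$ configurations without making any client uncoverable elsewhere in the instance.
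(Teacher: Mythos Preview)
Your proposal is a plan, not a proof: you correctly isolate the two hard steps---forcing a pure $(3,0)$ or $(0,3)$ choice on each variable day, and obtaining three-way symmetry in the clause gadget---but you do not actually carry out either. The ``companion backup days'' that are supposed to exclude the mixed $(2,1)$ and $(1,2)$ configurations are entirely unspecified, and this is not a detail one can wave away: with $d=3$ and only two paths, any client you place at position~3 of one path on $V_i$ to force purity is covered by exactly one of the two pure choices, so you need a second place to cover it, and then you must argue that this second place does not simply absorb the client and dissolve the constraint. Similarly, your three-day clause gadget puts $c_j$ at depth~3 behind two unnamed filler clients on each $D_j^r$; you do not say what those fillers are, where else they can be covered, or why the remaining literal-occurrence clients (which, remember, appear on \emph{every} day) do not interfere. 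As written the construction is not pinned down enough to verify either direction of the correctness argument.

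For comparison, the paper avoids both obstacles by reducing from \textsc{Monotone Not-All-Equal-SAT} in which every variable occurs in exactly three clauses. Each variable~$x_j$ gets six clients $i_t^{(j,T)},i_t^{(j,F)}$ for $t\in\{1,2,3\}$, and a Hall-matching argument arranges that the \emph{third} occurrence of every variable is the last literal of its clause. The effect is that $i_3^{(j,T)}$ and $i_3^{(j,F)}$ sit at position~3 on every day where they are reachable, so the only way to cover both is to take the full $T$-path or the full $F$-path on the variable day---purity is forced without any backup days. The NAE condition then gives the clause gadget built-in symmetry between the two truth values, so one day suffices for size-2 clauses and two days for size-3 clauses, with no per-literal day and no shared clause client. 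If you want to push the 3-SAT route through you will need concrete constructions for both gadgets; the proposal as it stands does not establish the theorem.
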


\begin{proof}
We reduce from the restriction of \textsc{Monotone Not-All-Equal-Sat}, where every variable appears in exactly three clauses, every clause contains two or three variables, and every clause contains only non-negated literals.
Given a set of clauses, we are asked whether there is an assignment of truth
values to the variables such that every clause contains at least one variable
that is set to true and at least one variable that is set to false. This
problem is known to be NP-complete~\cite{DBLP:journals/gc/DehghanSA15}.
By Hall's Marriage Theorem~\cite{Hall35}, the incidence graph contains a matching~$M$ which leaves no variable unmatched.
By considering the clause to which $M$ matches to a given variable~$x$ as the last occurence of~$x$, we may assume that the last occurrence of variable~$x$ is the last variable of the clause.
Let $a$ be the number of variables, $b_2$ be the number of clauses with two variables, and $b_3$ the number of clauses with three variables. We construct an instance of \ESPCabbrv as follows.
\begin{itemize}
\item We set the deadline to three, i.e., $d=3$, and we set $k=1$.
\item For each variable $x_j$, we create six clients: $i_1^{(j,T)}$, $i_2^{(j,T)}$, $i_3^{(j,T)}$, $i_1^{(j,F)}$, $i_2^{(j,F)}$, $i_3^{(j,F)}$.
\item We create three ``dummy clients'' $i_1^{(D)}$, $i_2^{(D)}$, $i_{3}^{(D)}$.
\item We create $m=1+a+b_2+2b_3$ days: one ``dummy day'', $a$ variable days,
and $b_2+2b_3$ clause days.
\end{itemize}
\emph{Dummy Day:} For the first day we create a precedence DAG that is one directed path starting with jobs of clients~$i_1^{(D)},i_2^{(D)},i_{3}^{(D)}$ and then the jobs of all remaining clients in an arbitrary order.

\smallskip

\noindent \emph{Variable Days:} For variable $x_j$ we create Day~$j+1$ with a precedence DAG that consists of two directed paths. The first path contains jobs of clients~$i_1^{(j,T)},i_2^{(j, T)},i_3^{(j,T)}$ in that order. The second path starts with jobs of clients $i_1^{(j,F)},i_2^{(j, F)},i_3^{(j,F)}$ in that order and then the jobs of all remaining clients in an arbitrary order.
\smallskip

\noindent \emph{Clause Days:}
Let~$(x_{j_1}, x_{j_2})$ be the $j$th clause containing two variables and let it contain the $t_1$th and $t_2$th appearence of $x_{j_1}$ and $x_{j_2}$, respectively.
We create Day~$a+j+1$ with a precedence DAG containing of the two paths $i_1^{(D)}, i_{t_1}^{(j_1, T)}, i_{t_2}^{(j_2, F)}$ and $i_2^{(D)},  i_{t_1}^{(j_1, F)}, i_{t_2}^{(j_2, T)}$.
Let $(x_{j_1},x_{j_2},x_{j_3})$ be the $j$th clause containing three variables and let it contain the $t_1$th, $t_2$th, and $t_3$th appearance of $x_{j_1}$, $x_{j_2}$, and $x_{j_3}$, respectively.
We create Days $a+b_2 + 2j$ and $a+ b_2 + 2j+ 1$ with precedence DAGs consisting of two directed paths.
On Day $a+ b_2 + 2 j$, the first path contains jobs of clients $i_{t_1}^{(j_1, F)}, i_{t_2}^{(j_2, F)}, i_{t_3}^{(j_3, T)}$ in that order.
The second path contains jobs of clients $i_{t_1}^{(j_1, T)}, i_{t_2}^{(j_2, T)},i_{t_3}^{(j_3, F)}$ in that order.
On Day $a + b_2 + 2j + 1$,
the first path starts with jobs of clients~$i_{1}^{(D)}, i_{t_1}^{(j_1, F)}, i_{t_2}^{(j_2, T)}$ in that order and then the jobs of all remaining clients in an arbitrary order.
The second path contains jobs of clients $i_2^{(D)}, i_{t_1}^{(j_1, T)}, i_{t_2}^{(j_2, F)}$ in that order.

This finishes the construction. 
Since maximum matchings on bipartite graphs can be computed in polynomial time~\cite{hopcroft1973n}, the reduction runs in polynomial time.

$(\Rightarrow)$
Assume we have a satisfying assignment of the \textsc{Monotone Not-All-Equal-SAT} formula.
We produce a set of $k$-equitable schedules for $k=1$ as follows.
On Day~1, we schedule the jobs of clients~$i_1^{(D)}, i_2^{(D)}, i_3^{(D)}$.
On each variable day, we schedule either jobs of clients $ i_1^{(j, T)}, i_2^{(j, T)}, i_3^{(j, T)}$ or jobs of clients~$i_1^{(j, F)}, i_2^{(j, F)}, i_3^{(j, F)}$.
We do the latter if variable~$x_j$ is set to true and the former otherwise.
For each clause containing exactly two variables~$x_{j_1}$ and~$x_{j_2}$, we schedule jobs of clients $i_1^{(D)},  i_{t_1}^{(j_1, T)}, i_{t_2}^{(j_2, F)}$ if $x_{j_1}$ is set to true, while we schedule jobs of clients~$i_2^{(D)}, i_{t_1}^{(j_1, F)}, i_{t_2}^{(j_2, T)}$ otherwise.
For each clause containing three variables~$x_{j_1}$, $x_{j_2}$, and $x_{j_3}$, we schedule jobs of clients~$i_{t_1}^{(j_1, F)}, i_{t_2}^{(j_2, F)}, i_{t_3}^{(j_3, T)}$ on the first clause day if $x_{j_3}$ is set to true and jobs of clients~$i_{t_1}^{(j_1, T)}, i_{t_2}^{(j_2, T)}, i_{t_3}^{(j_3, F)}$ otherwise.
On the second clause day, we schedule the jobs of clients~$i_{1}^{(D)}, i_{t_1}^{(j_1, F)}, i_{t_2}^{(j_2, T)}$ if either~$x_{j_1}$ is set to false and $x_{j_3}$ is set to false or $x_{j_2}$ is set to true and $x_{j_3}$ is set to true.
Otherwise, we schedule the jobs of clients~$i_2^{(D)}, i_{t_1}^{(j_1, T)}, i_{t_2}^{(j_2, F)}$.

It is easy to verify that all clients have at least on of their jobs scheduled.

$(\Leftarrow)$
Assume that we have a $k$-equitable set of schedules for $k=1$.
On Day~1, we may assume that jobs of clients~$i_1^{(d)},i_2^{(d)},i_3^{(d)}$ are scheduled, since it is never beneficial to leave slots empty.
First we show that one each variable day either jobs of clients~$ i_1^{(j, T)}, i_2^{(j, T)}, i_3^{(j, T)}$ or jobs of clients~$ i_1^{(j, F)}, i_2^{(j, F)}, i_3^{(j, F)}$ are scheduled:
Assume for a contradiction that this is not true for variable $x_j$.
Since the third appearance of~$x_j$ is the last variable of the corresponding clause, either job of client~$i_{t_3}^{(j_3, T)}$ or $i_3^{(j_3, F)}$ is not scheduled, a contradiction.
We claim that setting variable~$x_j$ to true if and only if jobs of clients~$i_1^{(j, F)}, i_2^{(j, F)}, i_3^{(j, F)}$ are scheduled on Day~$j+1$ yields a satisfying assignment.

Consider the $j$th clause~$(x_{j_1}, x_{j_2})$ with two variables, containing the $t_1$th and $t_2$th appearance of $x_{j_1}$ and $x_{j_2}$, respectively.
By the precedence constraints, jobs of clients $i_{t_1}^{(j_1, T)}, i_{t_1}^{(j_1, F)}, i_{t_2}^{(j_2, T)},i_{t_2}^{(j_2, F)}$ can only be scheduled on Day~${j_1 + 1}$, Day~$j_2 + 1$ or Day~$a+1 + j$.
If~$x_{j_1}$ and~$x_{j_2}$ are both set to true, then the jobs of both $i_{t_1}^{(j_1, T)}$ and~$i_{t_2}^{(j_2, T)}$ need to be scheduled on Day~$a + 1+ j$, which is impossible.
Similarly, if $x_{j_1}$ and $x_{j_2}$ are both set to false, then the jobs of both $i_{t_1}^{(j_1, F)}$ and $i_{t_2}^{(j_2, F)}$ need to be scheduled on Day~$a+ 1+ j$, which is impossible.

Consider the $j$th clause~$(x_{j_1}, x_{j_2}, x_{j_3})$ with three variables, containing the $t_1$th, $t_2$th, and $t_3$th appearance of $x_{j_1}$, $x_{j_2}$, and $x_{j_3}$, respectively.
By the precedence constraints, jobs of clients $i_{t_1}^{(j_1, T)}, i_{t_1}^{(j_1, F)}, i_{t_2}^{(j_2, T)},i_{t_2}^{(j_2, F)},i_{t_3}^{(j_3, T)}, i_{t_3}^{(j_3, F)}$ can only be scheduled on Day~$j_1 + 1$, Day~$j_2 + 1$, Day~$j_3 + 1$, Day $a+ b_2 + 2j$ or Day~$a + b_2 + 2j + 1$.
If $x_{j_1}, x_{j_2}$, and $x_{j_3}$ are set to true, then the job of client~$i_{t_3}^{(j_3, T)}$ needs to be scheduled on the first clause day.
Thus, jobs of both $i_{t_1}^{(j_1, T)}$ and $i_{t_2}^{(j_2, T)}$ need to be scheduled on the second clause day, which is impossible.
The case that~$x_{j_1}, x_{j_2}$, and $x_{j_3}$ are set to false leads to a contradiction by symmetric arguments.
\end{proof}

We remark that by introducing additional dummy clients, the reduction for
Theorem~\ref{thm:precedencefewpaths} can be modified in a way that the precedence DAGs consists of disjoint paths of constant length.
Hence, we get the following result.

\begin{corollary}
\ESPCabbrv is NP-hard even if $k=1$ and the precedence DAG of each day consists of disjoint paths of length at most four.
\end{corollary}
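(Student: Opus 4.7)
The plan is to start from the reduction used in the proof of Theorem~\ref{thm:precedencefewpaths} and replace every long directed path in its precedence DAGs by a collection of disjoint paths on at most four vertices, introducing fresh dummy clients together with one bookkeeping companion day per modified original day. In the original construction, the long paths appear only on the dummy day, on each of the $a$ variable days, and on the second clause day of each of the $b_3$ three-variable clauses---a total of $1+a+b_3$ days---and their sole purpose is to block every client listed past the third position from being scheduled on the day in question, since the deadline $d=3$ allows at most three jobs to be processed from a path. I will reproduce exactly this blocking with length-four paths built from fresh dummies.

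Concretely, whenever a day $j$ of the original construction carries a long path $p_1, p_2, p_3, q_1, \ldots, q_s$, I keep the length-three path $p_1, p_2, p_3$; for each blocked client $q_i$ I introduce three fresh dummy clients $\alpha_i^j, \beta_i^j, \gamma_i^j$ and insert the length-four path $\alpha_i^j, \beta_i^j, \gamma_i^j, q_i$ into the precedence DAG of day $j$. All remaining clients on day $j$ are left isolated, and the deadline $d_j=3$ is preserved. To provide a scheduling day for the fresh dummies, I add a single companion day $j^\star$ whose precedence DAG consists of the $s$ disjoint length-three paths $\alpha_i^j, \beta_i^j, \gamma_i^j$ (with every other client isolated) and whose deadline is $d_{j^\star}=3s$. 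All short paths from the original construction remain untouched. The resulting instance has polynomially many days, polynomially many clients, and every precedence DAG now consists of disjoint paths on at most four vertices.

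The YES direction is immediate: a satisfying Monotone Not-All-Equal-SAT assignment yields a $1$-equitable schedule by reusing the schedule of Theorem~\ref{thm:precedencefewpaths} on every original day (which fills the three slots of day $j$ with $p_1, p_2, p_3$) and scheduling, on each companion day $j^\star$, all $3s$ of its fresh dummies in any order consistent with the length-three paths, which saturates $d_{j^\star}$ exactly. For the NO direction, the crucial step is to show that in any feasible schedule each fresh dummy of day $j$ must be scheduled on $j^\star$ rather than on $j$: were a fresh dummy to occupy a slot on $j$, then the three tight slots of $j$ would no longer all be available for $p_1, p_2, p_3$, and the pigeonhole arguments of Theorem~\ref{thm:precedencefewpaths}---which on day $1$ use the unique-scheduling-day client $i_3^{(D)}$ and on a variable day or a second clause day use the fact that the third occurrence of a variable ties the variable day to the relevant clause day---would be violated. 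Once all fresh dummies are known to lie on the companion days, they saturate $d_{j^\star}$, so no $q_i$ or other client can squeeze in on $j^\star$; the situation on every original day is therefore identical to that in Theorem~\ref{thm:precedencefewpaths}, and the clique-style extraction of a SAT assignment transfers without change.

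The main obstacle will be this last pigeonhole step on variable days and on second clause days of three-variable clauses, where, unlike on day~$1$, no single client is individually pinned to day $j$. Instead, one needs the global argument from the proof of Theorem~\ref{thm:precedencefewpaths}: if a fresh dummy took a slot on such a day, then neither a full T-triple nor a full F-triple (respectively, neither of the two valid clause-day triples) could be scheduled there, forcing the third occurrence of one of the involved variables onto a clause day that cannot accommodate it, contradicting $1$-equitability. Carrying out this bookkeeping for each of the three kinds of long-path days is mechanical, but it is where most of the work of the proof resides.
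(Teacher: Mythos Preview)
Your approach is close in spirit to the paper's, but there is a real gap in the backward direction. On each companion day $j^\star$ you leave every client other than the day-$j$ dummies \emph{isolated}, and with deadline $3s$ this means every original client is freely schedulable on $j^\star$. In particular, the client $i_3^{(D)}$ that you call the ``unique-scheduling-day client'' for day~$1$ is isolated---and hence schedulable---on every companion day in your construction; the same holds for the third-occurrence clients $i_3^{(j,T)}$ and $i_3^{(j,F)}$ that drive the pigeonhole argument of Theorem~\ref{thm:precedencefewpaths}. Consequently the step ``were a fresh dummy to occupy a slot on $j$, then \ldots\ the pigeonhole arguments \ldots\ would be violated'' does not go through: the displaced original client can simply migrate to a companion day. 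Your argument is circular---you need the originals confined to original days in order to pin the dummies to $j^\star$, but you then use the dummies saturating $j^\star$ to conclude that the originals are confined to original days. Concretely, the companion day $1^\star$ for the dummy day has deadline $3(n-3)$, which is large enough to absorb all $n$ original clients at once; nothing in your construction rules out a feasible schedule for the new instance that satisfies every original client on $1^\star$ alone.

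The paper closes exactly this loophole: on every additional day the precedence DAG consists of length-four paths in which \emph{each original client sits in position~4}, and the deadline is kept at $d=3$, so no original client can ever be scheduled on an additional day. This immediately confines original clients to the original days, after which the correspondence with Theorem~\ref{thm:precedencefewpaths} is direct. The paper achieves this by reusing a single pool of at most $3n$ dummy clients across all days and adding $n$ additional days (each with deadline~$3$), rather than introducing fresh dummies and one large-deadline companion day per long-path day; keeping the deadline at~$3$ is precisely what enables the position-$4$ blocking of the originals. (A smaller side issue: your phrase ``all remaining clients on day $j$ are left isolated'' also dissolves the short second path present on variable days and second clause days, dropping those precedence constraints; presumably you intended to keep those short paths intact.)
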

\begin{proof}[Proof sketch]
In the proof of Theorem~\ref{thm:precedencefewpaths} we have that the precedence DAG consists of two disjoint paths on each day. However the deadline for each day is $d=3$, which means that every job that in located in one of the paths at a distance larger than three from the source of the path cannot be scheduled on that day. We can achieve the same by introducing at most $3 \cdot n$ dummy clients and $ n$ additional days (where $n$ is the number of clients in the instance constructed in the proof of Theorem~\ref{thm:precedencefewpaths}). On the new days we make sure that all jobs of the dummy clients can be scheduled. That can be done by using a precedence DAG that consist of disjoint paths of length four where the first three jobs of each path stem from dummy clients and the last job of each path (which cannot be scheduled before the deadline) stems from an original client. For each of the original days we modify the precedence DAG in the following way. We replace each of the disjoint paths by its first four vertices.  For each job that was located at a distance larger than four from the source of the path, we introduce a new path of length four to the precedence DAG that starts with three jobs of dummy clients and then the job from the original client. This ensures that this job now also cannot be scheduled on that day. Note that we have introduced sufficiently many dummy clients to be able to do this. The jobs of the (potentially) remaining dummy clients are distributed arbitrarily into disjoint paths of length at most four. Note that now all precedence DAGs consist of disjoint paths of length at most four.
\end{proof}

\begin{table*}[h]
 \begin{align}
   \sum_{A' \subseteq A} x_{G, A', d}  = \gamma (G, d) & \qquad \forall d\in \{n-|A| + 1, \dots, n\}, \text{ precedence DAGs } G \label{eq:prec-graphs-d}\\
   \sum_{A' \subseteq A} \xspecial_{G, A'} = \gammaspecial (G) & \qquad \forall \text{ precedence DAGs } G \label{eq:prec-graphs}\\
   \sum_{A' : i\in A'} \sum_{G} \Bigl( \xspecial_{G, A'} + \sum_{d =n - \alpha + 1}^{n} x_{G, A', d} \Bigr) \ge k & \qquad \forall i\in A \label{eq:jobs-from-A}\\
   x_{G, A', d} = 0 & \qquad \text{if } |A'| > d \label{eq:no-overloaded-day}\\
   \sum_{A' \subseteq A: |A'| \le d} \xspecial_{G, A'}  \ge \gamma^{\le d} (G) & \qquad \forall d\in \{1, \dots, n- \alpha\} \label{eq:no-overloaded-day-2}\\
   x_{G, A'}  = 0 & \qquad \forall A', G : \exists (i, i') \in E(G) \text{ with } i\notin A' \land i'\in A' \label{eq:obey-precedence}\\
   x_{G, A', d}  = 0 & \qquad \forall A', G, d\in \{n-\alpha +1, \dots, n\} : \exists (i, i') \in E(G) \nonumber \\
   & \qquad \text{with } i\notin A' \land i'\in A' \label{eq:obey-precedence2}\\
   \noalign{$\sum_{A' \subseteq A, G, d\in \{n - \alpha  +1 , \dots, n\}} \min\{d - |A'|, n -
   \alpha \} \cdot x_{G, A', d} + \sum_{j \in \{1, \dots, m\}: d_j \le n- \alpha} d_j - \sum_{A' \subseteq A, G} |A'| \cdot \xspecial (G, A')
   \ge k (n - |A|)$} \label{eq:other-jobs}
 \end{align}
 \end{table*}

\subsection{Algorithmic Result} In the following, we give an ILP formulation
for \ESPCabbrv to obtain fixed-parameter tractability for the number of clients that are incident to an arc in at least on precedence DAG.

\begin{theorem}
\ESPCabbrv is fixed-parameter tractable when parameterized by the number of clients that are incident to an arc in at least on precedence DAG.
\end{theorem}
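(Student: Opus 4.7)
The plan is to verify that the integer linear program displayed above correctly encodes \ESPCabbrv, to bound its number of variables by a function of $\alpha$ (the number of clients incident to an arc in some precedence DAG), and then to invoke the classical algorithm of~\citet{lenstra1983integer}.

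Let $A$ be the set of constrained clients, so $|A| = \alpha$, and note that in every precedence DAG $G_j$ the remaining $n - \alpha$ clients are isolated. Hence each precedence DAG is determined by its restriction to $A$, and there are at most $2^{\alpha^2}$ such restrictions. I would group the days into equivalence classes by the pair (restricted precedence DAG, deadline), keeping the ``long'' deadlines $d \in \{n - \alpha + 1, \dots, n\}$ separate and lumping the ``short'' deadlines $d \le n - \alpha$ together; the class sizes are the constants $\gamma(G,d)$ and $\gammaspecial(G)$. For each class and each subset $A' \subseteq A$, the corresponding variable $x_{G,A',d}$ or $\xspecial_{G,A'}$ counts how many days of the class schedule exactly the constrained clients of $A'$. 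The total number of variables is then $2^{O(\alpha^2)}$, depending only on $\alpha$.

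Next I would walk through the constraints and argue equivalence with the problem. Constraints~\eqref{eq:prec-graphs-d} and~\eqref{eq:prec-graphs} ensure the variables partition the day classes correctly; \eqref{eq:jobs-from-A} enforces $k$-equitability for the constrained clients; \eqref{eq:no-overloaded-day} and~\eqref{eq:no-overloaded-day-2} make $|A'|$ compatible with the deadline (the latter bucketing short days by their actual deadline); \eqref{eq:obey-precedence} and~\eqref{eq:obey-precedence2} force $A'$ to be closed under predecessors in $G$, which suffices for precedence feasibility because both endpoints of every arc already lie in $A$; and~\eqref{eq:other-jobs} reserves enough residual capacity across all days to serve the $n - \alpha$ unconstrained clients $k$ times each. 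In the easy direction, any $k$-equitable schedule directly yields feasible integer values for these variables by counting.

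The main obstacle is the converse direction. Given a feasible ILP solution, on each day the variables prescribe a subset $A'$ of constrained clients, which can be linearly ordered consistently with $G$ and placed in the first $|A'|$ slots. What remains is to distribute the $n - \alpha$ unconstrained clients into the residual slots so that each is scheduled on at least $k$ days. This is a bipartite $b$-matching in which every unconstrained client can be placed on any day, with per-day capacity $c_j \le n - \alpha$ and total capacity $\sum_j c_j \ge k(n-\alpha)$ guaranteed by~\eqref{eq:other-jobs}. Feasibility of the $b$-matching follows from a Hall-type argument: for any subset $T$ of $t \le n - \alpha$ unconstrained clients, the inequality $\min(c_j, t) \ge \tfrac{t \cdot c_j}{n - \alpha}$ gives $\sum_j \min(c_j, t) \ge kt$, which is precisely the required demand condition. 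With correctness established, the ILP has $f(\alpha)$ variables and $\operatorname{poly}(n, f(\alpha))$ constraints, so~\citet{lenstra1983integer} solves it in FPT time, yielding the theorem.
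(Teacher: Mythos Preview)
Your proposal is correct and follows the same ILP-plus-Lenstra strategy as the paper. The one noteworthy difference is in the converse direction for the unconstrained clients: the paper assigns them by an explicit greedy round-robin (always pick the unconstrained client with the fewest scheduled jobs so far), which directly guarantees any two such clients differ by at most one in their counts, so Equation~\eqref{eq:other-jobs} forces each to reach~$k$. You instead cast the assignment as a complete-bipartite $b$-matching and verify the Hall-type condition $\sum_j \min(c_j,t)\ge kt$ via the pointwise inequality $\min(c_j,t)\ge t c_j/(n-\alpha)$. Both arguments are valid; the paper's is more constructive, yours is a bit cleaner conceptually. One point you pass over lightly is how, for short-deadline days, the aggregated variable $\xspecial_{G,A'}$ is disaggregated to concrete days so that $|A'|\le d_j$ on each; the paper does this by sorting (larger $A'$ to larger deadlines), and Equation~\eqref{eq:no-overloaded-day-2} is exactly the Hall condition that makes this sorting succeed---you allude to this but do not spell it out.
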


\begin{proof}
 Let $\mathcal{I}$ be an instance of \ESPCabbrv.
 We assume without loss of generality that $d_j \le n$ for all days $j \in \{1, \dots, m\}$ (since on every day at most $n$ jobs can be scheduled, we can replace the deadline by $n$ otherwise).
 Let $A$ be the set of clients incident to at least one arc appearing in some precedence DAG.
 Let $\alpha := |A| $ and $\beta$ be the number of arcs appearing in at least one incidence DAG.
 Note that $\frac{\alpha}{2} \le \beta \le \binom{\alpha}{2}$.
 Note that the number of different precedence DAGs is at most $2^\beta$.
 For a precedence DAG $G$ and a deadline $d\in \{1, \dots, m\}$, let $\gamma (G, d)$ denote the number of days with precedence DAG $G$ and deadline $d$.
 We define $\gamma (G): = \sum_{r = 1}^n \gamma (G, r)$ and $\gamma^{\le d} (G) := \sum_{r =1 }^{d} \gamma (G, r)$.

 We construct an integer linear program (ILP) as follows.
 For each precedence DAG~$G$, subset $A' \subseteq A$, and $d\in \{n- \alpha + 1, \dots, n\}$, we add a variable $x_{G, A', d}$, indicating on how many days with precedence DAG~$G$ and deadline~$d$ exactly the jobs from clients in $A'$ are scheduled.
 Additional, for each precedence DAG~$G$ and subset $A' \subseteq A$, we add a variable $\xspecial_{G, A'}$, indicating on how many days with precedence DAG~$G$ and deadline at most $n-\alpha$ the jobs from clients in~$A'$ are scheduled.
 Furthermore, there are the constraints specified by Equations~\ref{eq:prec-graphs-d},~\ref{eq:prec-graphs},~\ref{eq:jobs-from-A},~\ref{eq:no-overloaded-day},~\ref{eq:no-overloaded-day-2},~\ref{eq:obey-precedence},~\ref{eq:obey-precedence2}, and~\ref{eq:other-jobs}.

 The number of variables in this ILP is bounded by $\alpha 2^{\alpha + \beta}$ and therefore can be solved in FPT-time with respect to $\alpha + \beta$ by~\citet{lenstra1983integer}.
 We now show that any solution to this ILP corresponds to a feasible schedule, and each solution to $\mathcal{I}$ corresponds to a solution to the ILP.

 $(\Rightarrow)$:
 Let $x$ be a feasible solution to the ILP.
 For each precedence DAG $G$ and each $d\in \{n - \alpha +1, \dots, n\}$, we schedule on $x_{G, A', d}$ of the days with precedence DAG~$G$ and deadline~$d$ the jobs of clients contained in $A'$, and no job of a client from $A\setminus A'$.
 For each precedence DAG~$G$, we schedule on $\xspecial_{G, A'}$ of the $\gammaspecial (G)$ days with precedence DAG $G$ and deadline at most $n - \alpha$ the jobs of clients contained in $A'$, and no job of a client from $A \setminus A'$.
 We do this in such a way that for each $A', A'' \subseteq A$ with $|A'| < |A''|$, it holds that the deadline of the days on which $A''$ is scheduled is at least the deadline of the days on which $A'$ is scheduled.
 We iterate from Day 1 to $m$.
 Let $A_j \subseteq A$ be the set of clients which have already a job scheduled on Day~$j$.
 For each Day $j$, as long as there are less than $d_j$ jobs scheduled on this day, and there is one job from a client in $\{1,\ldots,n\}\setminus A$ which has not been scheduled on this day, we pick a client~$i \in \{1, \dots, n\} \setminus A$ whose job has been scheduled fewest time till now, and schedule its job on Day $j$.
 Note that this procedure ensures that for two jobs~$i, i'\in \{1, \dots, n\} \setminus A$, during any point of the procedure, job~$i$ is scheduled at most once more than job $i'$.
 Thus, no job is scheduled twice on one day.

 Equations~\ref{eq:prec-graphs-d} and~\ref{eq:prec-graphs} ensure that each precedence DAG is considered exactly the number of times it actually appears in $\mathcal{I}$.
 Equation~\ref{eq:jobs-from-A} ensures that every client in $A$ has at least $k$ of its jobs scheduled, while Equation~\ref{eq:other-jobs} ensures this for all clients in $\{1,\ldots,n\}\setminus A$:
 On a day~$j$ with $d_j \in \{n-\alpha +1, \dots, n\}$, we schedule $\min\{d_j - |A'|, n - |A|\}$ jobs of these clients, while on a day~$j$ with $d_j \le n - \alpha$, we schedule $d_j - |A'_j|$ jobs of these clients.
 Since jobs of client~$i$ are scheduled at most once more than jobs of clients $i'$, it follows that the jobs of every client are scheduled at least $k$ times.
 Equations~\ref{eq:obey-precedence} and~\ref{eq:obey-precedence2} ensure that the precedence DAG is obeyed on every day.
 Equation~\ref{eq:no-overloaded-day} ensures that for every day with deadline at least $n - \alpha + 1$, the jobs scheduled on this day can be performed before the deadline.
 For all days with deadline at most~$n - \alpha$, this is ensured by Equation~\ref{eq:no-overloaded-day-2}:
 This inequality ensures that for any precedence graph~$G$ and any $d\le n-\alpha$, there are at least $\gamma^{\le d} (G) $ many days on which at most $d$ jobs from~$A $ are scheduled, and consequently, we schedule at most $d_j$ jobs on each such day $j$.

 $(\Leftarrow)$:
 Consider a feasible schedule.
 We set $x_{G, A', d}$ to be the number of days with precedence DAG $G$ and deadline $d \in \{n- \alpha +1 , \dots, n\}$ on which all jobs from clients in~$A'$ but no job from a client in $A\setminus A'$ is scheduled.
 Thus, Equation~\ref{eq:prec-graphs-d} is fulfilled.
 Similarly, let $\xspecial_{G, A'}$ be the number of days with precedence DAG $G$ and deadline at most $n - \alpha$ on which all jobs from clients in~$A'$ but no job from a client in $A\setminus A'$ is scheduled.
 It follows that Equation~\ref{eq:prec-graphs} is fulfilled.
 Since every client in $A$ has at least $k$ of its jobs scheduled, also Equation~\ref{eq:jobs-from-A} is fulfilled.
 Since for any day $j$ with deadline $d_j \in \{n - \alpha +1, \dots, n\}$ there are at most $d_j$ jobs (from~$A$) scheduled on this day, it follows that Equation~\ref{eq:no-overloaded-day} is fulfilled.
 Similarly, for each~$d\in \{1, \dots, n - \alpha\}$, on each of the days with deadline $d$ and precedence graph $G$, the set $A'$ of clients from $A$ whose jobs are scheduled on this day fulfills $|A'| \le d$, and therefore, Equation~\ref{eq:no-overloaded-day-2} holds.
 By the precedence constraints also Equations~\ref{eq:obey-precedence} and~\ref{eq:obey-precedence2} are fulfilled.
 For any day $j \in \{1, \dots, m\}$, let $|A'_j|$ be the set of clients from $A$ whose jobs have been scheduled on Day $j$.
 There are at most $\min \{d_j - |A'_j|, n - |A|\}$ jobs from $\{1, \dots, n\} \setminus A'$ scheduled on Day $j$.
 Since $\min \{d_j - |A'_j|, n- \alpha\} = d_j - |A'_j|$ if $d_j \le n - \alpha$, and every client in $\{1,\ldots,n\}\setminus A$ has at least $k$ of its jobs scheduled, it follows that Equation~\ref{eq:other-jobs} is fulfilled.
\end{proof}

\section{Conclusion}

We have introduced a promising new framework for single machine scheduling
problems. We investigated three basic single machine scheduling problems in this
framework and we believe that it might also be interesting in other scheduling
contexts. 

We leave several questions open for future research. We believe that it would be
promising to implement our approximation algorithm for \ESSDabbrvstar and, once provided with appropriate real-world data, test
how well it performs in practice. 
The question whether we can get similar
approximation results also for \ESSDabbrv and \ESPCabbrv remains unresolved.
For \ESPCabbrv, it is also remains open whether we can get similar combinatorial
algorithms as for \ESSDabbrv.

\bibliographystyle{abbrvnat}
\bibliography{journal/bibliography}

\end{document}